\theoremstyle{plain}
\newtheorem{prop}{Proposition}
\theoremstyle{definition}
\newtheorem{defi}{Definition}
\newcounter{ccounter}
\newcounter{cpcounter}
\newcounter{fcounter}
\DeclareMathOperator{\Lie}{Lie}
\DeclareMathOperator{\Diff}{Diff}
\DeclareMathOperator{\Vect}{Vect}
\begin{document}
\title{Spacetime and observer space symmetries in the language of Cartan geometry}

\author{Manuel Hohmann}
\email{manuel.hohmann@ut.ee}
\affiliation{Teoreetilise F\"u\"usika Labor, F\"u\"usika Instituut, Tartu \"Ulikool, Ravila 14c, 50411 Tartu, Estonia}

\begin{abstract}
We introduce a definition of symmetry generating vector fields on manifolds which are equipped with a first-order reductive Cartan geometry. We apply this definition to a number of physically motivated examples and show that our newly introduced notion of symmetry agrees with the usual notions of symmetry of affine, Riemann-Cartan, Riemannian and Weizenb\"ock geometries, which are conventionally used as spacetime models. Further, we discuss the case of Cartan geometries which can be used to model observer space instead of spacetime. We show which vector fields on an observer space can be interpreted as symmetry generators of an underlying spacetime manifold, and may hence be called ``spatio-temporal''. We finally apply this construction to Finsler spacetimes and show that symmetry generating vector fields on a Finsler spacetime are indeed in a one-to-one correspondence with spatio-temporal vector fields on its observer space.
\end{abstract}
\maketitle

\section{Motivation}\label{sec:motivation}
In physics one often encounters the situation that a particular geometric object defined on a manifold is invariant under the action of a Lie group on that manifold. Such a Lie group action leaving some object invariant is usually called a \emph{symmetry} of that object. There are various examples of such objects, such as connections, metrics, tetrads, more general tensor fields or Finsler length measures. Each of these objects comes with its own definition of invariance under Lie group actions. The aim of this article is to provide a unified description of symmetries for a number of these objects. For this purpose we introduce a notion of symmetry for a particular structure known as Cartan geometry~\cite{Cartan:1935,Sharpe:1997}. It has the advantage that many different geometric objects induce particular subclasses of Cartan geometries. This allows us to describe their symmetries using a Cartan geometric framework.

One of the most prominent applications of geometry is the theory of gravity. Different theories of gravity are based on very different geometric objects, such as an affine connection in affine gravity~\cite{Einstein:1923a,Einstein:1923b,Eddington:1924}, a metric and a torsion in Einstein-Cartan gravity~\cite{Cartan:1922,Cartan:1923zea,Kibble:1961ba,Sciama:1964wt,Hehl:1976kj,Hehl:1994ue,Blagojevic:2002du,Blagojevic:2013xpa,Trautman:2006fp}, only a metric in general relativity and related theories~\cite{Einstein:1915ca,Carroll:2004}, or a tetrad in teleparallel gravity~\cite{Blagojevic:2002du,Blagojevic:2013xpa,Einstein:1928,Aldrovandi:2013wha,Baez:2012bn}. These objects have in common that they are sections of or connections on a fiber bundle over a spacetime manifold. However, also more general objects are used. An example is the Finsler length function~\cite{Chern:2000,Bucataru:2007}, which is used in Finsler geometric theories of gravity~\cite{Horvath:1950,Vacaru:2010fi,Vacaru:2007ng,Vacaru:2002kp,Pfeifer:2011tk,Pfeifer:2011xi,Pfeifer:2013gha}. Further, also Cartan geometry itself has been suggested as the mathematical framework for describing gravity and providing a possible link between canonical and covariant quantum gravity approaches~\cite{Wise:2006sm,Wise:2009fu,Gielen:2011mk,Gielen:2012fz,Wise:2013dja}. We will not discuss the physical content of these theories in this article, but keep our focus on the geometries they employ in order to model the gravitational interaction.

Knowing how to describe the symmetry of the geometric objects used in a gravity theory allows to classify solutions of its field equations by their degree of symmetry, as it is done, e.g., in the case of general relativity~\cite{Stephani:2003tm}. This task can be generalized by performing this classification for more general geometric objects, of which the aforementioned objects are special cases. While this may seem simple for geometric objects which are defined on spacetime itself, it is less obvious for objects such as Finsler length functions, which are defined on a bundle over spacetime. However, all of these objects can be described in terms of Cartan geometry, which is the reason for choosing the latter as the central object of the work presented in this article.

In fact, using Cartan geometry it is also possible to describe geometric models in which the underlying spacetime is not primary part of the description, or an absolute notion of spacetime which is independent of an observer may not even exist. This notion of a relative spacetime appears in a class of theories which describe physics on the space of observers, identified with the space of physical four-velocities, instead of spacetime~\cite{Gielen:2012fz,Wise:2013dja,Hohmann:2013fca,Hohmann:2014gpa}. Of course, also in this case we are interested in those symmetries which can be interpreted by an observer as transformations of (possibly relative) spacetime, and leave notions such as the relation between positions and velocities invariant. It is another aim of this article to provide a definition of such symmetries, which we call spacetime preserving, or ``spatio-temporal''.

Naturally the question arises why observers should be characterized only by their positions and four-velocities, and not by higher derivatives of their trajectories, such as their acceleration - in particular, since accelerated observers are distinguished from non-accelerated observers, e.g., in general relativity. The main reason why we restrict ourselves to positions and velocities in this work is because we wish to identify the trajectories of observers with those of test particles. For a test particle with given mass and gauge charge(s), the trajectory is commonly determined by a set of equations of motion derived from a Lagrangian. In order to avoid Ostrogradski instabilities, these should be of second order in their time derivatives~\cite{Ostrogradski:1850}. Hence, the acceleration and all higher derivatives of the observer's trajectory are fully determined from the position and velocity by the equations of motion. However, we remark that the notion of symmetry developed in this article could in principle be generalized to higher order observer spaces using jet bundles~\cite{Saunders:1989,Giachetta:2009}.

Of course, notions of symmetry for Cartan geometry already exist. The main difference between these standard notions of symmetry and the approach detailed in this article can be briefly summarized as follows. First, note that a Cartan geometry on a manifold \(M\) is defined in terms of a Lie algebra valued one-form \(A\), called the Cartan connection, on the total space \(P\) of a principal fiber bundle \(\pi: P \to M\). Morphisms between Cartan geometries \((\pi: P \to M, A)\) and \((\pi': P' \to M', A')\) are conventionally defined as principal bundle morphisms \(\Phi: P \to P'\) which preserve the Cartan connection, \(A = \Phi^*A'\)~\cite{CapSlovak:2009}, or as generalizations thereof involving a change of the Lie algebra~\cite{McKay:2008}. Symmetries of a Cartan geometry are thus modeled by principal bundle automorphisms, or infinitesimally by vector fields on the total space \(P\) of a Cartan geometry. Being the generators of bundle morphisms, these vector fields descend to vector fields on the base manifold \(M\).

The approach we discuss here proceeds in the opposite direction: we define infinitesimal symmetries of Cartan geometries in terms of vector fields on the base manifold \(M\), for which we provide a lifting prescription to the total space \(P\), obtained from their canonical lift to the general linear frame bundle \(\mathrm{GL}(M)\). This approach appears more natural in the context of symmetries of other spacetime geometries, which are usually modeled by vector fields on spacetime itself, and not on a bundle over spacetime.

This article is organized as follows. We start with a brief discussion of mathematical preliminaries in section~\ref{sec:preliminaries}. These are used for our definition of symmetries of general first-order reductive Cartan geometries in section~\ref{sec:cartan}. We then discuss several examples for model Klein geometries. In particular, we discuss affine Cartan geometry in section~\ref{sec:affine} and orthogonal Cartan geometry in section~\ref{sec:orthogonal}, both of which are used as models for spacetime geometries. From there we move on to the geometry of general observer spaces, whose symmetries we discuss in section~\ref{sec:observer}. As a particular example we discuss orthogonal observer space geometry in section~\ref{sec:ortobs}, which in particular includes Finsler spacetime geometry. We end with a conclusion in section~\ref{sec:conclusion}.

\section{Preliminaries}\label{sec:preliminaries}
In this section we briefly review the mathematical preliminaries needed for our constructions during the remainder of this article. In section~\ref{subsec:lie} we discuss the Lie derivative and in particular explain how it can be applied to affine connections. We then show how vector fields on a manifold can be lifted to its tangent bundle and general linear frame bundle in section~\ref{subsec:lifts}. We finally display the definitions of the geometries we will be working with, which are Cartan geometry in section~\ref{subsec:cartan} and Finsler spacetime geometry in section~\ref{subsec:finsler}.

\subsection{The Lie derivative}\label{subsec:lie}
Symmetries of spacetimes are conventionally described by the invariance of geometrical objects, such as the metric or the torsion, under a finite-dimensional subgroup \(S\) of the diffeomorphism group \(\Diff(M)\) acting on the spacetime manifold \(M\), which carries the structure of a Lie group. The Lie algebra \(\mathfrak{s} = \Lie S\) of this symmetry group is then given by a subalgebra of the algebra \(\Vect(M)\) of vector fields on \(M\). It is well known that the action of a diffeomorphism generated by the flow of a vector field \(\xi\) on a tensor field \(W\) is described by the Lie derivative \(\mathcal{L}_{\xi}W\). However, it is less common practice to apply the Lie derivative to objects which are not tensor fields on \(M\), in particular to affine connections. For this reason we briefly summarize the definition and basic formulas for the Lie derivative of connections in this section.

Let \(\varphi: \mathbb{R} \times M \to M\) be a one-parameter group of diffeomorphisms on \(M\) generated by a vector field \(\xi \in \Vect(M)\). This means that for all \(t \in \mathbb{R}\), \(\varphi_t: M \to M\) is the unique diffeomorphism on \(M\) such that for all \(x \in M\) the tangent vectors to the curve \(t \mapsto \varphi_t(x)\) are given by
\begin{equation}
\frac{d}{dt}\varphi_t(x) = \xi(\varphi_t(x))
\end{equation}
and \(\varphi_0 = \mathrm{id}_M\). We call \(\varphi_t\) the flow of \(\xi\). Note that \(\varphi_t \circ \varphi_{t'} = \varphi_{t + t'}\) and in particular \(\varphi_{-t} = \varphi_t^{-1}\). For a tensor field \(W\) of type \((m,n)\), which is a section of the bundle \(TM^{\otimes m} \otimes T^*M^{\otimes n}\), we define the pullback by \(\varphi_t\) as
\begin{equation}\label{eqn:tenspullback}
\varphi_t^*W = W \circ (\underbrace{\varphi_{-t*} \otimes \ldots \otimes \varphi_{-t*}}_{m \text{ times}} \otimes \underbrace{\varphi_{t*} \otimes \ldots \otimes \varphi_{t*}}_{n \text{ times}})
\end{equation}
where \(\varphi_{t*}: TM \to TM\) denotes the differential of \(\varphi_t\). Using coordinates \((x^a)\) and writing \(x' = \varphi_t(x)\) this yields the familiar relation
\begin{equation}
(\varphi_t^*W)^{a_1 \cdots a_m}{}_{b_1 \cdots b_n}(x) = W^{c_1 \cdots c_m}{}_{d_1 \cdots d_n}(x')\frac{\partial x^{a_1}}{\partial x'^{c_1}}\ldots\frac{\partial x^{a_m}}{\partial x'^{c_m}}\frac{\partial x'^{d_1}}{\partial x^{b_1}}\ldots\frac{\partial x'^{d_n}}{\partial x^{b_n}}\,,
\end{equation}
which describes the transformation of the components of \(W\) under diffeomorphisms. The Lie derivative is then defined as
\begin{equation}\label{eqn:tensliedef}
(\mathcal{L}_{\xi}W) = \left.\frac{d}{dt}(\varphi_t^*W)\right|_{t = 0}\,.
\end{equation}
It is easy to see that in coordinates it yields the standard formula
\begin{equation}
\begin{split}
(\mathcal{L}_{\xi}W)^{a_1 \cdots a_m}{}_{b_1 \cdots b_n} &= \xi^c\partial_cW^{a_1 \cdots a_m}{}_{b_1 \cdots b_n}\\
&\phantom{=}- \partial_c\xi^{a_1}W^{ca_2 \cdots a_m}{}_{b_1 \cdots b_n} - \ldots - \partial_c\xi^{a_m}W^{a_1 \cdots a_{m - 1}c}{}_{b_1 \cdots b_n}\\
&\phantom{=}+ \partial_{b_1}\xi^cW^{a_1 \cdots a_m}{}_{cb_2 \cdots b_n} + \ldots + \partial_{b_n}\xi^cW^{a_1 \cdots a_m}{}_{b_1 \cdots b_{n - 1}c}\,.
\end{split}
\end{equation}
Note that \(\mathcal{L}_{\xi}W\) is a tensor with the same shape as \(W\).

Given an affine connection \(\nabla\) on \(M\) we define the pullback \(\varphi_t^*\nabla\) similarly to the tensor case~\eqref{eqn:tenspullback} as
\begin{equation}
(\varphi_t^*\nabla)_XY = \varphi_{-t*} \circ \nabla_{\varphi_{t*} \circ X}(\varphi_{t*} \circ Y)\,,
\end{equation}
where \(X, Y\) are vector fields. From the definition of the connection coefficients\footnote{Note that there exist different conventions in the literature regarding the order of the lower two indices. Here we choose the second lower index to be the ``derivative'' index.}
\begin{equation}
\Gamma^a{}_{bc} = dx^a(\nabla_{\partial_c}\partial_b)
\end{equation}
one then directly reads off
\begin{equation}
(\varphi_t^*\Gamma)^a{}_{bc}(x) = \Gamma^d{}_{ef}(x')\frac{\partial x^a}{\partial x'^d}\frac{\partial x'^e}{\partial x^b}\frac{\partial x'^f}{\partial x^c} + \frac{\partial x^a}{\partial x'^d}\frac{\partial^2 x'^d}{\partial x^b \partial x^c}\,,
\end{equation}
which is the familiar formula for the transformation of connection coefficients under a diffeomorphism. Formally, we can then define the Lie derivative of the connection as
\begin{equation}
(\mathcal{L}_{\xi}\nabla) = \left.\frac{d}{dt}(\varphi_t^*\nabla)\right|_{t = 0}\,,
\end{equation}
whose meaning becomes clear by writing out the connection components
\begin{equation}\label{eqn:connlieder}
\begin{split}
(\mathcal{L}_{\xi}\Gamma)^a{}_{bc} &= \xi^d\partial_d\Gamma^a{}_{bc} - \partial_d\xi^a\Gamma^d{}_{bc} + \partial_b\xi^d\Gamma^a{}_{dc} + \partial_c\xi^d\Gamma^a{}_{bd} + \partial_b\partial_c\xi^a\\
&= \nabla_c\nabla_b\xi^a - \xi^dR^a{}_{bcd} - \nabla_c(\xi^dT^a{}_{bd})\,,
\end{split}
\end{equation}
where \(T\) and \(R\) denote the torsion and curvature tensors of \(\nabla\), respectively, and we use the sign conventions
\begin{equation}\label{eqn:riemann}
R^a{}_{bcd} = \partial_c\Gamma^a{}_{bd} - \partial_d\Gamma^a{}_{bc} + \Gamma^a{}_{ec}\Gamma^e{}_{bd} - \Gamma^a{}_{ed}\Gamma^e{}_{bc}
\end{equation}
and
\begin{equation}\label{eqn:torsion}
T^a{}_{bc} = \Gamma^a{}_{cb} - \Gamma^a{}_{bc}\,.
\end{equation}
Note that \(\mathcal{L}_{\xi}\Gamma\) is a tensor field, which can immediately be seen from the expression in the second line in equation~\eqref{eqn:connlieder}.

One can also define the Lie derivatives of geometric objects which are defined not on the manifold itself, but on its tangent or frame bundle. This will be shown in the next section and in section~\ref{subsec:finsler}.

\subsection{Complete lifts of vector fields}\label{subsec:lifts}
In this article we wish to express symmetries in terms of vector fields on a manifold \(M\) as summarized in the previous section. However, the objects which should be invariant under the flow of this vector field will be defined not on \(M\) itself, but on the tangent bundle \(TM\) or (a subbundle of) the bundle \(\mathrm{GL}(M)\) of linear frames of \(TM\), as defined, e.g., in~\cite{Kobayashi:1963}. Since these bundles are examples of \emph{natural bundles}, there exists a \emph{functorial lift} of smooth maps, and thus in particular (infinitesimal) diffeomorphisms, to smooth bundle morphisms, and thus in particular (infinitesimal) bundle automorphisms on these bundles~\cite{Kolar:1993}. In this section we briefly review how this fact allows lifting vector fields from the base manifold to the total space of these bundles.

We start with the complete lift of a vector field to the tangent bundle, following the construction in~\cite{Yano:1973}: Let \(\varphi: \mathbb{R} \times M \to M\) be a one-parameter group of diffeomorphisms on \(M\) generated by a vector field \(\xi\) on \(M\) as in the preceding section. For each \(t \in \mathbb{R}\) the differential \(\varphi_{t*}: TM \to TM\) of \(\varphi_t: M \to M\) is a vector bundle automorphism of \(TM\). One easily checks that the definition \(\hat{\varphi}_t = \varphi_{t*}\) yields a one-parameter group \(\hat{\varphi}\) of diffeomorphisms on \(TM\). The vector field \(\hat{\xi}\) on \(TM\) generating \(\hat{\varphi}\) is called the complete lift of \(\xi\).

It is convenient to introduce coordinates on \(TM\) as follows. Given coordinates \((x^a)\) on \(M\), we can denote any tangent vector by
\begin{equation}
y^a\left.\frac{\partial}{\partial x^a}\right|_x
\end{equation}
in the coordinate basis \(\partial/\partial x^a\) of \(TM\). The coordinates \((x^a,y^a)\) on \(TM\) are called induced coordinates and will be used throughout this article. For convenience we further introduce the notation
\begin{equation}
\left\{\partial_a = \frac{\partial}{\partial x^a}, \bar{\partial}_a = \frac{\partial}{\partial y^a}\right\}
\end{equation}
for the induced coordinate basis of \(TTM\). Using this notation the complete lift \(\hat{\xi}\) of \(\xi\) to \(TM\) can be written in the form
\begin{equation}
\hat{\xi} = \xi^a\partial_a + y^b\partial_b\xi^a\bar{\partial}_a\,.
\end{equation}

We now come to complete lifts of vector fields to the general linear frame bundle \(\mathrm{GL}(M)\), which can be defined as
\begin{equation}\label{eqn:glframebundle}
\mathrm{GL}(M) = \bigcup_{x \in M}\{\text{linear bijections } f: \mathbb{R}^n \to T_xM\}\,,
\end{equation}
where \(n\) is the dimension of \(M\). A diffeomorphism \(\varphi_t\) on \(M\) then defines a diffeomorphism \(\bar{\varphi}_t\) on \(\mathrm{GL}(M)\) as \(\bar{\varphi}_t(f) = \varphi_{t*} \circ f\). A quick calculation shows that this definition yields a one-parameter group \(\bar{\varphi}\) of principal bundle automorphisms of \(\mathrm{GL}(M)\) which is generated by a vector field \(\bar{\xi}\). We call \(\bar{\xi}\) the complete lift of \(\xi\) to the frame bundle.

On the frame bundle we can introduce coordinates similar to the induced coordinates on the tangent bundle. Given coordinates \((x^a)\) on \(M\), we can denote any linear map \(f: \mathbb{R}^n \to T_xM\) by
\begin{equation}
f_i^a\left.\frac{\partial}{\partial x^a}\right|_x\,,
\end{equation}
where the index \(i\) refers to the canonical basis of \(\mathbb{R}^n\). We also call \((x^a,f_i^a)\) induced coordinates on \(\mathrm{GL}(M)\). For the induced coordinate basis of \(T\mathrm{GL}(M)\) we introduce the notation
\begin{equation}
\left\{\partial_a = \frac{\partial}{\partial x^a}, \bar{\partial}^i_a = \frac{\partial}{\partial f_i^a}\right\}\,.
\end{equation}
In this notation the complete lift \(\bar{\xi}\) of \(\xi\) to \(\mathrm{GL}(M)\) reads
\begin{equation}\label{eqn:glclift}
\bar{\xi} = \xi^a\partial_a + f_i^b\partial_b\xi^a\bar{\partial}^i_a\,.
\end{equation}
Note that we could have obtained this expression also using the following definition: Let \(\theta \in \Omega^1(\mathrm{GL}(M), \mathbb{R}^n)\) be the solder (or canonical) form on \(M\)~\cite{Kobayashi:1963}, which in our coordinates takes the form \(\theta^i = f^{-1}{}^i_adx^a\) in the canonical basis of \(\mathbb{R}^n\). Then the complete lift \(\bar{\xi}\) is the unique vector field on \(\mathrm{GL}(M)\) which projects to \(\xi\) on \(M\) and leaves the canonical form invariant, \(\mathcal{L}_{\bar{\xi}}\theta = 0\). One easily checks that also this definition yields the coordinate expression~\eqref{eqn:glclift}.

We will frequently make use of the expressions for the complete lifts in the remainder of this article.

\subsection{Cartan geometry}\label{subsec:cartan}
The main geometric structure we use in this article is Cartan geometry as developed in~\cite{Cartan:1935}; see~\cite{Sharpe:1997,Wise:2006sm} for a detailed discussion. The Cartan geometry of a manifold \(M\) is based on a comparison to the geometry of a suitable model space, which is a homogeneous space (in this context also called Klein geometry), i.e., the coset space \(G/H\) of a Lie group \(G\) and a closed subgroup \(H \subset G\). For a given a homogeneous space, a Cartan geometry can be defined as follows:

\begin{defi}[Cartan geometry]\label{def:cartan}
A \emph{Cartan geometry} modeled on a Klein geometry \(G/H\) is a principal $H$-bundle \(\pi: \mathcal{P} \to M\) together with a $\mathfrak{g}$-valued 1-form \(A \in \Omega^1(\mathcal{P},\mathfrak{g})\) on \(\mathcal{P}\), such that
\begin{list}{\textrm{C\arabic{ccounter}}.}{\usecounter{ccounter}}
\item\label{cartan:isomorphism}
For each \(p \in \mathcal{P}\), \(A_p: T_p\mathcal{P} \to \mathfrak{g}\) is a linear isomorphism.
\item\label{cartan:equivariant}
\(A\) is $H$-equivariant: \((R_h)^*A = \mathrm{Ad}(h^{-1}) \circ A\) \(\forall h \in H\).
\item\label{cartan:mcform}
\(A(\tilde{h}) = h\) for all \(h \in \mathfrak{h}\), where \(\tilde{h}\) denotes the fundamental vector field of \(h\).
\end{list}
\end{defi}

In the definition above, the fundamental vector field \(\tilde{h} \in \Vect(\mathcal{P})\) associated to \(h \in \mathfrak{h}\) via the action of the group \(H\) on \(\mathcal{P}\) is the vector field generating a one-parameter group of (vertical) diffeomorphisms of \(\mathcal{P}\), which corresponds to the action of the one-parameter subgroup of \(H\) generated by \(h\) on \(\mathcal{P}\)~\cite{Kobayashi:1963}.

Instead of describing the Cartan geometry in terms of the Cartan connection \(A\), which is equivalent to specifying a linear isomorphism \(A_p: T_p\mathcal{P} \to \mathfrak{g}\) for all \(p \in \mathcal{P}\) due to condition~\ref{cartan:isomorphism}, we can use the inverse maps \(\underline{A}_p = A_p^{-1}: \mathfrak{g} \to T_p\mathcal{P}\). For each \(a \in \mathfrak{g}\) they define a section \(\underline{A}(a)\) of the tangent bundle \(T\mathcal{P}\), which we call an associated vector field:

\begin{defi}[Associated vector fields]
Let \((\pi: \mathcal{P} \to M, A)\) be a Cartan geometry modeled on \(G/H\). For each \(a \in \mathfrak{g}\) the \emph{associated vector field} \(\underline{A}(a)\) is the unique vector field such that \(A(\underline{A}(a)) = a\).
\end{defi}

We can therefore equivalently define a Cartan geometry in terms of its associated vector fields, due to the following proposition:

\begin{prop}
Let \((\pi: \mathcal{P} \to M, A)\) be a Cartan geometry modeled on \(G/H\) and \(\underline{A}: \mathfrak{g} \to \Vect(\mathcal{P})\) its associated vector fields. Then the properties~\ref{cartan:isomorphism} to~\ref{cartan:mcform} of \(A\) are respectively equivalent to the following properties of \(\underline{A}\):
\begin{list}{\textrm{C\arabic{cpcounter}}'.}{\usecounter{cpcounter}}
\item\label{cartan:isomorphism2}
For each \(p \in \mathcal{P}\), \(\underline{A}_p: \mathfrak{g} \to T_p\mathcal{P}\) is a linear isomorphism.
\item\label{cartan:equivariant2}
\(\underline{A}\) is $H$-equivariant: \(R_{h*} \circ \underline{A} = \underline{A} \circ \mathrm{Ad}(h^{-1})\) \(\forall h \in H\).
\item\label{cartan:canonical}
\(\underline{A}\) restricts to the fundamental vector fields on \(\mathfrak{h}\): \(\underline{A}(h) = \tilde{h}\) \(\forall h \in \mathfrak{h}\).
\end{list}
\end{prop}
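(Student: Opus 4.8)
The plan is to reduce everything to a single pointwise observation: wherever the isomorphism property holds, the linear maps $A_p$ and $\underline{A}_p$ are mutually inverse. Indeed, the defining relation $A(\underline{A}(a)) = a$ reads pointwise as $A_p \circ \underline{A}_p = \mathrm{id}_{\mathfrak{g}}$, so $\underline{A}_p$ is always a right inverse of $A_p$ and in particular $A_p$ is surjective. First I would dispose of the equivalence C1 $\Leftrightarrow$ C1': since $\mathcal{P}$ is a principal $H$-bundle over an $n$-dimensional base with $n = \dim(G/H)$, one has $\dim T_p\mathcal{P} = \dim M + \dim H = \dim G = \dim\mathfrak{g}$, so a one-sided inverse between these equidimensional spaces is automatically two-sided. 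Hence $A_p$ is a linear isomorphism if and only if $\underline{A}_p$ is, and in that case $\underline{A}_p = A_p^{-1}$, so that $\underline{A}_p \circ A_p = \mathrm{id}_{T_p\mathcal{P}}$ holds as well.

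With the relation $\underline{A}_p = A_p^{-1}$ in hand, the remaining two equivalences become formal ``apply the inverse to both sides'' arguments, carried out fibre by fibre. For C2 $\Leftrightarrow$ C2' I would write the equivariance of $A$ in pointwise form, $A_{ph}(R_{h*}v) = \mathrm{Ad}(h^{-1})(A_p(v))$ for $v \in T_p\mathcal{P}$, substitute $v = \underline{A}_p(a)$ so that $A_p(v) = a$, and then apply $\underline{A}_{ph} = A_{ph}^{-1}$ to obtain $R_{h*}(\underline{A}_p(a)) = \underline{A}_{ph}(\mathrm{Ad}(h^{-1})a)$, which is exactly C2'. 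The converse is obtained symmetrically by applying $A_{ph}$ and using that $\underline{A}_p(a)$ exhausts $T_p\mathcal{P}$ as $a$ ranges over $\mathfrak{g}$. For C3 $\Leftrightarrow$ C3' the computation is a one-liner: $A_p(\tilde{h}_p) = h$ holds for all $p$ if and only if $\tilde{h}_p = A_p^{-1}(h) = \underline{A}_p(h)$ for all $p$, i.e.\ $\underline{A}(h) = \tilde{h}$.

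The only point requiring genuine care, and what I would treat as the main obstacle, is the bookkeeping of base points in the equivariance equivalence. Because C2 and C2' relate the values of $A$, respectively $\underline{A}$, in the two \emph{distinct} fibres over $p$ and $ph$, one must be careful to invoke the inverse relation $\underline{A}_q = A_q^{-1}$ at the correct point $q$ (here $q = ph$) and to track the differential $R_{h*} : T_p\mathcal{P} \to T_{ph}\mathcal{P}$ accordingly; a slip in the base point would silently break the argument even though each individual manipulation is elementary. Everything else is linear algebra applied pointwise, so no further analytic input is needed.
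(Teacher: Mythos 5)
Your proposal is correct. The paper in fact states this proposition without proof, but the surrounding text already introduces \(\underline{A}_p\) as the pointwise inverse \(A_p^{-1}\), which is exactly your central observation, and the remaining ``apply the inverse on the correct fibre'' manipulations you carry out (including the base-point bookkeeping \(p \mapsto ph\) in the equivariance condition) are precisely the argument the paper leaves implicit.
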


We add a few remarks to the definition of Cartan geometry given above. Recall that for each \(x \in M\) the fiber \(\mathcal{P}_x\) over \(x\) can be identified with the group \(H\), up to left multiplication with an arbitrary element of \(H\). Via this identification, the Maurer-Cartan form \(\mu \in \Omega^1(H, \mathfrak{h})\) induces a one-form \(\mu_x \in \Omega^1(\mathcal{P}_x, \mathfrak{h})\) on each fiber. Condition~\ref{cartan:mcform} then states that the restriction of \(A\) to each fiber \(\mathcal{P}_x\) agrees with the induced one-form \(\mu_x\)~\cite{Sharpe:1997}. In a similar fashion, the fundamental vector fields \(\tilde{h}\) for \(h \in \mathfrak{h}\) are related to the left invariant vector fields on \(H\) via the same identification. Finally, if there exists a split \(\mathfrak{g} = \mathfrak{h} \oplus \mathfrak{z}\) into (not necessarily irreducible) subrepresentations of the adjoint representation of \(H \subset G\) on \(\mathfrak{g}\), and hence a split \(A = \omega + e\) of the Cartan connection, one can identify the associated vector fields \(\underline{A}(\mathfrak{z})\) with the standard horizontal vector fields, which lie in the kernel of the $\mathfrak{h}$-valued part \(\omega\) of the Cartan connection (which is a principal bundle connection)~\cite{Kobayashi:1963}.

Further, for \(h \in \mathfrak{h}\) and \(g \in \mathfrak{g}\) it follows from condition~\ref{cartan:equivariant2} that
\begin{equation}\label{eqn:equivariant2}
\mathcal{L}_{\tilde{h}}\underline{A}(g) = \underline{A}([h,g])\,.
\end{equation}
Similarly, condition~\ref{cartan:equivariant} implies that
\begin{equation}\label{eqn:equivariant}
\mathcal{L}_{\tilde{h}}A = -[h,A]\,.
\end{equation}
Note that the last two equations are equivalent to the original conditions~\ref{cartan:equivariant2} and~\ref{cartan:equivariant} if and only if \(H\) is connected. We will make use of these formulas when we discuss particular Cartan geometries in order to restrict the allowed Cartan connections.

In this article we will frequently encounter (infinitesimal generators of) maps between Cartan geometries. In the context of symmetry we will need a notion of maps which preserve the Cartan geometry. These maps are defined as follows~\cite{CapSlovak:2009}.
\begin{defi}[Cartan geometry morphism]
A \emph{morphism of Cartan geometries} between \((\pi: \mathcal{P} \to M, A)\) and \((\pi': \mathcal{P}' \to M', A')\) modeled on \(G/H\) is a principal bundle morphism \(\Phi: \mathcal{P} \to \mathcal{P}'\) such that \(A = \Phi^*A'\).
\end{defi}
In this article we will deal mostly with automorphisms of Cartan geometries, and one-parameter groups of automorphisms and their generating vector fields in particular. We will return to this topic in section~\ref{sec:cartan}, when we discuss symmetries of Cartan geometries in more detail.

\subsection{Finsler spacetimes}\label{subsec:finsler}
A particularly interesting application for the framework discussed in this article, which goes beyond the classical description of spacetime geometries in terms on metrics and connections, is the application to Finsler geometry~\cite{Bucataru:2007}. Here we will employ the Finsler spacetime framework developed in~\cite{Pfeifer:2011tk,Pfeifer:2011xi,Pfeifer:2013gha}. Its central definition is that of a Finsler spacetime, which we use as formulated in~\cite{Hohmann:2013fca,Hohmann:2014gpa}, and which we briefly review here, as it will be crucial for our construction in section~\ref{subsec:finslersym}:

\begin{defi}[Finsler spacetime]
A \emph{Finsler spacetime} \((M,L,F)\) of dimension \(n\) is a $n$-dimensional, connected, Hausdorff, paracompact, smooth manifold \(M\) equipped with continuous real functions \(L, F\) on the tangent bundle \(TM\) which has the following properties:
\begin{list}{\textrm{F\arabic{fcounter}}.}{\usecounter{fcounter}}
\item\label{finsler:lsmooth}
\(L\) is smooth on the tangent bundle without the zero section \(TM \setminus \{0\}\).
\item\label{finsler:lhomogeneous}
\(L\) is positively homogeneous of real degree \(r \geq 2\) with respect to the fiber coordinates of~\(TM\),
\begin{equation}
L(x,\lambda y) = \lambda^rL(x,y) \quad \forall \lambda > 0\,,
\end{equation}
and defines the Finsler function \(F\) via \(F(x,y) = |L(x,y)|^{\frac{1}{r}}\).
\item\label{finsler:lreversible}
\(L\) is reversible: \(|L(x,-y)| = |L(x,y)|\).
\item\label{finsler:lhessian}
The Hessian
\begin{equation}
g^L_{ab}(x,y) = \frac{1}{2}\bar{\partial}_a\bar{\partial}_bL(x,y)
\end{equation}
of \(L\) with respect to the fiber coordinates is non-degenerate on \(TM \setminus X\), where \(X \subset TM\) has measure zero and does not contain the null set \(\{(x,y) \in TM | L(x,y) = 0\}\).
\item\label{finsler:timelike}
The unit timelike condition holds, i.e., for all \(x \in M\) the set
\begin{equation}
\Omega_x = \left\{y \in T_xM \left| |L(x,y)| = 1, g^L_{ab}(x,y) \text{ has signature } (\epsilon,-\epsilon,\ldots,-\epsilon), \epsilon = \frac{L(x,y)}{|L(x,y)|}\right.\right\}
\end{equation}
contains a non-empty closed connected component~\({S_x \subseteq \Omega_x \subset T_xM}\).
\end{list}
\end{defi}
In contrast to the standard notion of Euclidean Finsler geometry, which is defined only by the Finsler function \(F\), this definition uses an auxiliary function \(L\) in order to define a suitable null structure via condition~\ref{finsler:lhessian} and timelike vectors via condition~\ref{finsler:timelike}. The main advantage of using this auxiliary function is the fact that in contrast to \(F\) it is differentiable also on the null structure, and can thus be used to define geometric objects such as null geodesics. In the following we will show the definitions of a few of these objects and discuss their behavior under diffeomorphisms of the underlying spacetime manifold.

An important object in Finsler geometry and related geometries on the tangent bundle is the \emph{Cartan non-linear connection}~\cite{Bucataru:2007}
\begin{equation}
\left.\nabla_X\partial_b\right|_x = \left.N^a{}_b(x, X(x))\partial_a\right|_x
\end{equation}
on the manifold \(M\), which is non-linear in the vector \(X\) specifying the direction of differentiation. Its components \(N^a{}_b(x,y)\) are functions on the tangent bundle \(TM\) and can be derived from the geometry function \(L\) as
\begin{equation}\label{eqn:nonlinconn}
N^a{}_b = \frac{1}{4}\bar{\partial}_b\left[g^{L\,ac}(y^d\partial_d\bar{\partial}_cL - \partial_cL)\right]\,.
\end{equation}
From the Cartan non-linear connection we can derive the \emph{Berwald basis}~\cite{Bucataru:2007}
\begin{equation}\label{eqn:finslerhorz}
\{\delta_a = \partial_a - N^b{}_a\bar{\partial}_b, \bar{\partial}_a\}
\end{equation}
of \(TTM\). Its dual basis of \(T^*TM\) is given by
\begin{equation}\label{eqn:cotberwald}
\{dx^a, \delta y^a = dy^a + N^a{}_bdx^b\}\,.
\end{equation}
In addition to the non-linear connection on \(M\), there exist a number of linear connections on \(TM\) which are compatible with the Cartan non-linear connection in the sense that they respect the split of the tangent bundle manifest by the Berwald basis~\eqref{eqn:finslerhorz} and operate identically on vector fields \(\xi^a\delta_a\) and \(\xi^a\bar{\partial}_a\). A connection satisfying these conditions is called \emph{$N$-linear connection} and is of the form
\begin{equation}
\nabla_{\delta_a}\delta_b = F^c{}_{ba}\delta_c\,, \quad \nabla_{\delta_a}\bar{\partial}_b = F^c{}_{ba}\bar{\partial}_c\,, \quad \nabla_{\bar{\partial}_a}\delta_b = C^c{}_{ba}\delta_c\,, \quad \nabla_{\bar{\partial}_a}\bar{\partial}_b = C^c{}_{ba}\bar{\partial}_c\,.
\end{equation}
The coefficients \(F^c{}_{ba}\) and \(C^c{}_{ba}\) are functions on the tangent bundle and are defined differently for the different types of linear connections; see~\cite{Chern:2000,Bucataru:2007} for a thorough discussion. Here we will in particular use the Cartan linear connection, whose coefficients are given by
\begin{equation}\label{eqn:cartlinconn}
F^a{}_{bc} = \frac{1}{2}g^{F\,ad}(\delta_bg^F_{cd} + \delta_cg^F_{bd} - \delta_dg^F_{bc})\,, \quad
C^a{}_{bc} = \frac{1}{2}g^{F\,ad}(\bar{\partial}_bg^F_{cd} + \bar{\partial}_cg^F_{bd} - \bar{\partial}_dg^F_{bc})\,,
\end{equation}
where \(g^F_{ab} = \frac{1}{2}\bar{\partial}_a\bar{\partial}_bF^2\) is the Finsler metric. It has the property that the Finsler metric is covariantly constant with respect to the Cartan linear connection. This plays an important role for the relation between Finsler an Cartan geometry, which we use in section~\ref{subsec:finslercartan}.

We finally discuss the question under which conditions a Finsler spacetime is invariant under a diffeomorphism of the underlying spacetime manifold. The notion of symmetry we consider her has been extensively studied in~\cite{Tashiro:1959}; here we only display the parts which are relevant for our calculation in section~\ref{subsec:finslersym}.\footnote{Note that~\cite{Tashiro:1959} also discusses symmetries of Cartan spaces. However, these are different geometric structures than the Cartan geometries we reviewed in section~\ref{subsec:cartan}.} Recall from section~\ref{subsec:lifts} that a one-parameter subgroup \(\varphi\) of diffeomorphisms of \(M\) generated by a vector field \(\xi\) induces a one-parameter subgroup \(\hat{\varphi}\) of diffeomorphisms of \(TM\) generated by a vector field \(\hat{\xi}\), which we call the complete lift of \(\xi\) to \(TM\). Under these induced diffeomorphisms the Finsler geometric objects defined above transform as follows.
\begin{itemize}
\item
The geometry function \(L\) transforms as a scalar on \(TM\),
\begin{equation}
(\hat{\varphi}_t^*L)(x,y) = L(x',y')\,.
\end{equation}
Its Lie derivative is thus given by
\begin{equation}
\mathcal{L}_{\hat{\xi}}L = \xi^a\partial_aL + y^b\partial_b\xi^a\bar{\partial}_aL\,.
\end{equation}
\item
The Hessian \(g^L_{ab}\) and the Finsler metric \(g^F_{ab}\) transform like tensors on the manifold as
\begin{equation}
(\hat{\varphi}_t^*g^{L/F})_{ab}(x,y) = g^{L/F}_{cd}(x',y')\frac{\partial x'^c}{\partial x^a}\frac{\partial x'^d}{\partial x^b}\,.
\end{equation}
Objects with this transformation behavior are called d-tensors. The Lie derivative is given by
\begin{equation}
(\mathcal{L}_{\hat{\xi}}g^{L/F})_{ab} = \xi^c\partial_cg^{L/F}_{ab} + y^d\partial_d\xi^c\bar{\partial}_cg^{L/F}_{ab} + \partial_a\xi^cg^{L/F}_{cb} + \partial_b\xi^cg^{L/F}_{ac}\,.
\end{equation}
\item
The coefficients \(N^a{}_b\) of the Cartan non-linear connection transform as
\begin{equation}
(\hat{\varphi}_t^*N)^a{}_b(x,y) = N^c{}_d(x',y')\frac{\partial x^a}{\partial x'^c}\frac{\partial x'^d}{\partial x^b} + y^d\frac{\partial^2 x'^c}{\partial x^b\partial x^d}\frac{\partial x^a}{\partial x'^c}\,.
\end{equation}
From this follows the Lie derivative
\begin{equation}
(\mathcal{L}_{\hat{\xi}}N)^a{}_b = \xi^c\partial_cN^a{}_b + y^d\partial_d\xi^c\bar{\partial}_cN^a{}_b - \partial_c\xi^aN^c{}_b + \partial_b\xi^cN^a{}_c + y^c\partial_b\partial_c\xi^a\,.
\end{equation}
\item
The coefficients \(F^a{}_{bc}\) and \(C^a{}_{bc}\) of any $N$-linear connection transform as
\begin{subequations}
\begin{align}
(\hat{\varphi}^*F)^a{}_{bc}(x,y) &= F^d{}_{ef}(x',y')\frac{\partial x^a}{\partial x'^d}\frac{\partial x'^e}{\partial x^b}\frac{\partial x'^f}{\partial x^c} + \frac{\partial x^a}{\partial x'^d}\frac{\partial^2 x'^d}{\partial x^b \partial x^c}\,,\\
(\hat{\varphi}^*C)^a{}_{bc}(x,y) &= C^d{}_{ef}(x',y')\frac{\partial x^a}{\partial x'^d}\frac{\partial x'^e}{\partial x^b}\frac{\partial x'^f}{\partial x^c}\,,
\end{align}
\end{subequations}
so that also \(C^a{}_{bc}\) is a d-tensor. Their Lie derivative therefore takes the form
\begin{subequations}
\begin{align}
(\mathcal{L}_{\hat{\xi}}F)^a{}_{bc} &= \xi^d\partial_dF^a{}_{bc} + y^e\partial_e\xi^d\bar{\partial}_dF^a{}_{bc} - \partial_d\xi^aF^d{}_{bc} + \partial_b\xi^dF^a{}_{dc} + \partial_c\xi^dF^a{}_{bd} + \partial_b\partial_c\xi^a\,,\\
(\mathcal{L}_{\hat{\xi}}C)^a{}_{bc} &= \xi^d\partial_dC^a{}_{bc} + y^e\partial_e\xi^d\bar{\partial}_dC^a{}_{bc} - \partial_d\xi^aC^d{}_{bc} + \partial_b\xi^dC^a{}_{dc} + \partial_c\xi^dC^a{}_{bd}\,.
\end{align}
\end{subequations}
\end{itemize}
Note that all objects listed above are derived from the fundamental geometry function \(L\). A lengthy, but straightforward calculation shows that all of their Lie derivatives vanish for a given vector field \(\xi\) on \(M\) if \(\mathcal{L}_{\hat{\xi}}L = 0\). A vector field satisfying this condition is therefore regarded a symmetry of the Finsler spacetime. We will make use of this property when we discuss the symmetries of Finsler spacetimes in more detail in section~\ref{subsec:finslersym}.

This concludes our brief introduction of the basic mathematical notions we will be working with. We will continue with a deeper discussion of Cartan geometry as introduced in section~\ref{subsec:cartan}, in particular under the aspect of symmetry.

\section{Symmetries of first-order reductive Cartan geometries}\label{sec:cartan}
After clarifying the mathematical preliminaries we now come to the discussion of symmetries in the language of Cartan geometry. While we have given the general definition of a Cartan geometry in section~\ref{subsec:cartan}, we now restrict ourselves to a class of Cartan geometries called first-order reductive. We provide their definition and discuss a few of their properties in section~\ref{subsec:forcartan}. For these we then construct a notion of symmetry, i.e., invariance under the flow of a vector field, in section~\ref{subsec:cartansym}.

\subsection{First order reductive Cartan geometries}\label{subsec:forcartan}
In this article we will be dealing with Cartan geometries whose model Klein geometry has additional properties. Note that for any Klein geometry \(G/H\) the adjoint representation of \(G\) on its Lie algebra \(\mathfrak{g}\) yields a representation of \(H \subset G\) on \(\mathfrak{g}\), which has the adjoint representation of \(H\) on \(\mathfrak{h} \subset \mathfrak{g}\) as a subrepresentation. The following property of the quotient representation on \(\mathfrak{g}/\mathfrak{h}\), which is a vector space isomorphic to any tangent space of the homogeneous space \(G/H\), will be crucial for our constructions in this article~\cite{Sharpe:1997}:
\begin{defi}[First-order Cartan geometry]
A Cartan geometry with model Klein geometry \(G/H\) is called first-order Cartan geometry if the quotient representation of \(H\) on \(\mathfrak{g}/\mathfrak{h}\) is faithful. Otherwise, it is called higher-order Cartan geometry.
\end{defi}
In this article we consider only first-order Cartan geometries. Following the definition given in~\cite{Gielen:2012fz}, we now consider the associated bundle
\begin{equation}
\mathcal{T} = \mathcal{P} \times_H \mathfrak{g}/\mathfrak{h}\,,
\end{equation}
which is called the \emph{fake tangent bundle}~\cite{Wise:2006sm}. Its elements are equivalence classes \([p,z]\) of pairs \((p,z) \in \mathcal{P} \times \mathfrak{g}/\mathfrak{h}\) under the identification
\begin{equation}\label{eqn:assocbundle}
[p,z] = [R_hp, \mathrm{Ad}(h^{-1})z]\,, \quad h \in H\,.
\end{equation}
For \(x \in M\), we call a linear map \(\mathfrak{g}/\mathfrak{h} \to \mathcal{T}_x\) an \emph{admissible frame} of \(\mathcal{T}_x\) if it is of the form \(z \mapsto [p,z]\) for some \(p \in \mathcal{P}_x\). Note that for a first-order Cartan geometry there is a canonical one-to-one correspondence between elements of \(\mathcal{P}\) and admissible frames of the fake tangent bundle. For this reason, we can canonically identify \(\mathcal{P}\) with this space of admissible frames, which we call the \emph{fake frame bundle}, and write \(p(z)\) instead of \([p,z]\). This identification also allows us to write the action of \(H\) on \(\mathcal{P}\) as \(R_hp = p \circ \mathrm{Ad}(h)\), which immediately follows from equation~\eqref{eqn:assocbundle}.

Note that there is no canonical relation between the fake tangent bundle \(\mathcal{T}\) and the tangent bundle \(TM\). However, there is a distinguished one given by a \emph{coframe field}, which is a vector bundle isomorphism from \(TM\) to \(\mathcal{T}\). We can obtain a coframe field from the Cartan connection, provided that the following condition holds~\cite{Sharpe:1997}:
\begin{defi}[Reductive Cartan geometry]
A Cartan geometry with model Klein geometry \(G/H\) is called \emph{reductive} if the Lie algebra \(\mathfrak{g}\) allows a decomposition of the form \(\mathfrak{g} = \mathfrak{h} \oplus \mathfrak{z}\) into subrepresentations of the adjoint representation of \(H\).
\end{defi}
In the remainder of this article we will consider only first-order reductive Cartan geometries. Observe that we have \(\mathfrak{z} \cong \mathfrak{g}/\mathfrak{h}\) as representations of \(H\), so that we can replace all occurrences of \(\mathfrak{g}/\mathfrak{h}\) in the construction of the fake tangent and fake frame bundles above by \(\mathfrak{z}\).

Making use of the split of \(\mathfrak{g}\), we see that the Cartan connection \(A \in \Omega^1(\mathcal{P}, \mathfrak{g})\) decomposes in the form \(A = \omega + e\), where \(\omega \in \Omega^1(\mathcal{P}, \mathfrak{h})\) and \(e \in \Omega^1(\mathcal{P}, \mathfrak{z})\). From the latter we can obtain a coframe field \(\varepsilon: TM \to \mathcal{T}\) by the following construction. Let \(x \in M\) and \(v \in T_xM\). Choose \(p \in \mathcal{P}_x\) and \(w \in T_p\mathcal{P}\) such that \(\pi_*(w) = v\). Finally, define
\begin{equation}\label{eqn:tmfake}
\varepsilon(v) = p(e(w))\,.
\end{equation}
To see that this definition is independent of the choice of \(w \in T_p\mathcal{P}\), let \(w' \in T_p\mathcal{P}\) with \(\pi_*(w') = v\). The difference \(w - w'\) then lies in the kernel of \(\pi_*\), and is thus of the form \(\tilde{h}(p)\) for some \(h \in \mathfrak{h}\). According to property~\ref{cartan:mcform} of a Cartan connection, we have \(A(\tilde{h}) = h \in \mathfrak{h}\), and thus, \(e(w - w') = 0\). Further, the definition is also independent of the choice of \(p \in \mathcal{P}_x\), which follows from the equivariance property~\ref{cartan:equivariant}. Thus, the map \(\varepsilon\) is well-defined. One easily checks that \(\varepsilon\) is indeed a vector bundle isomorphism as a consequence of condition~\ref{cartan:isomorphism}.

The coframe field now allows relating the fake frame bundle \(\mathcal{P}\) and the frame bundle \(\mathrm{GL}(M)\). Recall that each element \(p \in \mathcal{P}_x\) can be interpreted as an admissible frame, i.e., a map \(p: \mathfrak{z} \to \mathcal{T}_x\). Compositing this map with the inverse of the coframe field we then obtain a map \(\varepsilon^{-1} \circ p: \mathfrak{z} \to T_xM\), which we call an admissible frame of \(T_xM\). This notion is compatible with our previous definition~\eqref{eqn:glframebundle} of a frame as a linear bijection \(f: \mathbb{R}^n \to T_xM\), since \(\varepsilon^{-1} \circ p\) is a linear bijection and \(\mathfrak{z} \cong \mathbb{R}^n\) as vector spaces. The assignment of an admissible frame to each \(p \in \mathcal{P}\) is a map \(\phi: \mathcal{P} \to \mathrm{GL}(M)\), which is injective, but in general not surjective. Its image \(P \subset \mathrm{GL}(M)\) is a subbundle of \(\mathrm{GL}(M)\) which is isomorphic to \(\mathcal{P}\), and hence also a principal $H$-bundle with the group action given by \(R_hf = f \circ \mathrm{Ad}(h)\). We call \(P\) the \emph{admissible frame bundle} of the Cartan geometry \((\pi: \mathcal{P} \to M, A)\). Making once more use of the map \(\phi\), it is equipped with the structure of a Cartan geometry \((\tilde{\pi}: P \to M, \tilde{A})\), where \(\tilde{\pi} = \pi \circ \phi^{-1}\) is the restriction of the bundle map of \(\mathrm{GL}(M)\) to \(P\) and \(\tilde{A} = (\phi^{-1})^*(A)\).

We conclude our discussion of first-order reductive Cartan geometries with the remark that the $\mathfrak{z}$-valued part \(\tilde{e} = (\phi^{-1})^*e\) of the Cartan connection on \(P\) has a particular form. Let \(p \in \mathcal{P}\) and \(w \in T_p\mathcal{P}\), as well as \(\tilde{p} = \phi(p)\) and \(\tilde{w} = \phi_*(w)\). By the definition of the pullback we have \(\tilde{e}(\tilde{w}) = e(w)\). Further, from the fact that \(\phi\) is a bundle morphism follows that \(\tilde{\pi}_*(\tilde{w}) = \pi_*(w)\). Finally, using the definition~\eqref{eqn:tmfake} of the map \(\varepsilon\), we find that
\begin{equation}\label{eqn:solderform}
\tilde{e}(\tilde{w}) = e(w) = p^{-1}(\varepsilon(\pi_*(w))) = (\varepsilon^{-1} \circ p)^{-1}(\tilde{\pi}_*(\tilde{w})) = \tilde{p}^{-1}(\tilde{\pi}_*(\tilde{w}))\,.
\end{equation}
The 1-form \(\tilde{e}\) defined by the expression on the right hand side is called the \emph{solder form}, or canonical form, which we denoted \(\theta\) at the end of section~\ref{subsec:lifts}.

\subsection{Definition of symmetries in Cartan language}\label{subsec:cartansym}
We finally introduce a notion of invariance of a first-order reductive Cartan geometry \((\pi: \mathcal{P} \to M, A)\) under diffeomorphisms of the underlying manifold \(M\). Recall from section~\ref{subsec:lifts} that any diffeomorphism \(\varphi: M \to M\) induces a diffeomorphism \(\bar{\varphi}: \mathrm{GL}(M) \to \mathrm{GL}(M)\), and that this relation yields the notion of the complete lift \(\bar{\xi}\) of a vector field \(\xi\) from \(M\) to \(\mathrm{GL}(M)\), where \(\xi\) and \(\bar{\xi}\) are the generators of diffeomorphisms on \(M\) and \(\mathrm{GL}(M)\), respectively. We define:
\begin{defi}[Invariant Cartan geometry]\label{def:invcartgeom}
A first-order reductive Cartan geometry \((\pi: \mathcal{P} \to M, A)\) is called \emph{invariant} under the group of diffeomorphisms generated by a vector field \(\xi\) on \(M\) if \(\bar{\xi}\) is tangent to the admissible frame bundle \(P \subset \mathrm{GL}(M)\) and the Cartan connection \(\tilde{A}\) is invariant under the flow of \(\bar{\xi}\) restricted to \(P\), \(\mathcal{L}_{\bar{\xi}}\tilde{A} = 0\), where \(P\) and \(\tilde{A}\) are defined as in section~\ref{subsec:forcartan}.
\end{defi}
Note that we could equally well have used the associated vector fields \(\underline{\tilde{A}} = \phi_*\underline{A}\) in order to define the invariance of a Cartan geometry. Depending on the particular example it is more convenient to work with one or the other definition.

As one can see from the definition above, the invariance of the Cartan geometry \((\pi: \mathcal{P} \to M, A)\) under a vector field \(\xi\) on \(M\) is completely defined in terms of the Cartan geometry \((\tilde{\pi}: P \to M, \tilde{A})\), which we derived from the original Cartan geometry using the procedure detailed in the preceding section. Further, if the original Cartan geometry is already defined on a subbundle \(\mathcal{P} = P\) of the frame bundle, and the Cartan connection \(A\) is chosen such that \(e\) is the solder form, we find that \(\varepsilon\) and \(\phi\) are the identity maps on their respective domains and \(\tilde{A} = A\). For this reason, we will simplify our notation and drop the tilde in the rest of this article, i.e., we will assume that we start from a Cartan geometry \((\pi: P \to M, A)\) defined on \(P \subset \mathrm{GL}(M)\) with \(e\) being the solder form on \(P\).

Note that the solder form \(e\) is defined on the whole frame bundle as \(e(w) = p^{-1}(\pi_*(w))\) for \(p \in \mathrm{GL}(M)\) and \(w \in T_p\mathrm{GL}(M)\), and that it is invariant under the frame bundle lift \(\bar{\varphi}\) of any diffeomorphism \(\varphi: M \to M\). To see this, recall from section~\ref{subsec:lifts} that \(\bar{\varphi}\) acts on \(\mathrm{GL}(M)\) via composition with \(\varphi_*\). We thus have
\begin{multline}
\bar{\varphi}^*(e)(w) = e(\bar{\varphi}_*(w)) = \bar{\varphi}(p)^{-1}(\pi_*(\bar{\varphi}_*(w))) = \left[p^{-1} \circ (\varphi_*)^{-1} \circ \pi_* \circ \bar{\varphi}_*\right](w)\\
= \left[p^{-1} \circ (\varphi^{-1} \circ \pi \circ \bar{\varphi})_*\right](w) = p^{-1}(\pi_*(w)) = e(w)\,,
\end{multline}
and therefore \(\bar{\varphi}^*(e) = e\). As a consequence, the Lie derivative \(\mathcal{L}_{\bar{\xi}}e\) vanishes for the complete lift \(\bar{\xi}\) of any vector field \(\xi\) on \(M\). In order to check the invariance of a Cartan connection as stated in the definition above it is thus sufficient to check that \(\bar{\xi}\) is tangent to \(P\) and that \(\mathcal{L}_{\bar{\xi}}\omega = 0\).

We finally remark that there is a close relation between the invariance of Cartan geometries which we use here and morphisms of Cartan geometries as given in section~\ref{subsec:cartan}. If a first-order reductive Cartan geometry \((\pi: P \to M, A)\) is invariant under a vector field \(\xi\) on \(M\), then the restriction of the frame bundle lift \(\bar{\xi}\) generates a one-parameter group of automorphisms of Cartan geometries on \(P\). However, the converse is not true: not every one-parameter group of automorphisms of Cartan geometries is generated by the frame bundle lift of a vector field on \(M\). In this work we focus on those symmetries which are generated and completely defined by a vector field on \(M\), and can thus be interpreted as symmetries of spacetime.

The definition of invariance of first-order reductive Cartan geometries provides us with the general framework we will be using throughout the remainder of this article. In the following we will apply this framework to particular examples of Cartan geometries, which correspond to other structures conventionally used to describe the geometry of spacetime. The first example is affine geometry, which we discuss in the following section.

\section{Affine Cartan geometries}\label{sec:affine}
After having discussed the notion of symmetry for a general first-order reductive Cartan geometry, we now come to the discussion of Cartan geometries based on particular model Klein geometries \(G/H\). As we have seen in the previous section, any first-order reductive Cartan geometry gives rise to a Cartan geometry on a subbundle \(P\) of the general linear frame bundle \(\mathrm{GL}(M)\) of the underlying manifold \(M\). The largest possible subbundle \(P\) is of course the frame bundle \(\mathrm{GL}(M)\) itself. This is the case we study in this section, where \(G\) will be an affine group and \(H\) will be the corresponding general linear group. We start with a brief overview of this model Klein geometry and its relation to affine connections on \(M\) in section~\ref{subsec:affklein}. We then show in section~\ref{subsec:affconnsym} how the notion of symmetry we discussed in the preceding section relates to the symmetry of an affine connection as given in section~\ref{subsec:lie}. The purpose of this section is rather illustrative, as this is the most simple model geometry we consider in this article. We will discuss more advanced models in the following sections.

\subsection{The affine model geometry}\label{subsec:affklein}
In this section we discuss Cartan geometries based on the model \(G/H\), where \(G = \mathrm{Aff}(n, \mathbb{R}) = \mathrm{GL}(n, \mathbb{R}) \ltimes \mathbb{R}^n\) is the general affine group and \(H = \mathrm{GL}(n, \mathbb{R})\) is the general linear group. We therefore briefly review the Lie algebra structure of \(\mathfrak{aff}(n, \mathbb{R})\) and \(\mathfrak{gl}(n, \mathbb{R})\). Note that this model is reductive, as required for the construction in the previous section, so that \(\mathfrak{aff}(n, \mathbb{R})\) splits into a direct sum \(\mathfrak{aff}(n, \mathbb{R}) = \mathfrak{gl}(n, \mathbb{R}) \oplus \mathbb{R}^n\). We can thus uniquely write every element \(a \in \mathfrak{aff}(n, \mathbb{R})\) in the form
\begin{equation}\label{eqn:affalgbasis}
a = h + z = h^i{}_j\mathcal{H}_i{}^j + z^i\mathcal{Z}_i\,,
\end{equation}
where \(\mathcal{H}_i{}^j\) is the matrix with entries 1 in the $i$th row and $j$th column and 0 otherwise, and \(\mathcal{Z}_i\) is the canonical basis of \(\mathbb{R}^n\). The Lie bracket of \(\mathfrak{aff}(n, \mathbb{R})\) then follows from the Lie brackets of the basis elements, which take the form
\begin{equation}\label{eqn:affalgebra}
[\mathcal{H}_i{}^j,\mathcal{H}_k{}^l] = \delta^j_k\mathcal{H}_i{}^l - \delta^l_i\mathcal{H}_k{}^j\,,\quad
[\mathcal{H}_i{}^j,\mathcal{Z}_k] = \delta^j_k\mathcal{Z}_i\,,\quad
[\mathcal{Z}_i,\mathcal{Z}_j] = 0\,.
\end{equation}
We can use the same basis~\eqref{eqn:affalgbasis} to expand the Cartan connection \(A\) in the form
\begin{equation}\label{eqn:affcartconn}
A = \omega + e = \omega^i{}_j\mathcal{H}_i{}^j + e^i\mathcal{Z}_i\,,
\end{equation}
using the notation introduced in the previous section. Similarly, we can expand the associated vector fields \(\underline{A}\) in the dual basis, which is reflected by the notation
\begin{equation}\label{eqn:afffundvect}
\underline{A}\left(h^i{}_j\mathcal{H}_i{}^j + z^i\mathcal{Z}_i\right) = h^i{}_j\underline{\omega}_i{}^j + z^i\underline{e}_i\,.
\end{equation}
We will use these basis expansions throughout the remainder of this section.

We now turn our focus from the structure of the affine Lie algebra to the geometry of the frame bundle \(P = \mathrm{GL}(M)\). In particular, we are interested in its vertical tangent spaces and the one-forms \(\mu_x \in \Omega^1(P_x, \mathfrak{h})\) induced by the Maurer-Cartan form \(\mu \in \Omega^1(H, \mathfrak{h})\), which are relevant for the definition of a Cartan geometry, as stated in section~\ref{subsec:cartan}. Using the notations above and the induced coordinates \((x^a,f_i^a)\) on \(P\) introduced in section~\ref{subsec:lifts} the induced one-forms \(\mu_x\) and the fundamental vector fields \(\widetilde{\mathcal{H}}_i{}^j\) are given by~\cite{Kobayashi:1963}
\begin{equation}\label{eqn:affmccan}
\mu_x = f^{-1}{}^i_adf_j^a\mathcal{H}_i{}^j \quad \text{and} \quad \widetilde{\mathcal{H}}_i{}^j = f_i^a\bar{\partial}^j_a\,.
\end{equation}
Using these expressions, is helpful to express the conditions~\ref{cartan:equivariant} and~\ref{cartan:mcform} on the Cartan connection \(A\), as well as the conditions~\ref{cartan:equivariant2} and~\ref{cartan:canonical} on the associated vector fields \(\underline{A}\) in coordinates for later use. The conditions~\ref{cartan:mcform} and~\ref{cartan:canonical} then take the form
\begin{equation}\label{eqn:affmccan2}
e^i(\bar{\partial}^j_a) = 0\,, \quad \omega^i{}_j(\bar{\partial}^k_a) = \delta^k_jf^{-1}{}^i_a \qquad \text{and} \qquad \underline{\omega}_i{}^j = f_i^a\bar{\partial}^j_a\,.
\end{equation}
The conditions~\ref{cartan:equivariant} and~\ref{cartan:equivariant2} can most easily be expressed in differential form. Condition~\ref{cartan:equivariant} then takes the form
\begin{equation}
\mathcal{L}_{\widetilde{\mathcal{H}}_i{}^j}\omega = -\omega^j{}_k\mathcal{H}_i{}^k + \omega^k{}_i\mathcal{H}_k{}^j\,, \quad
\mathcal{L}_{\widetilde{\mathcal{H}}_i{}^j}e = -e^j\mathcal{Z}_i\,,\label{eqn:affeqv}
\end{equation}
while condition~\ref{cartan:equivariant2} reads
\begin{equation}
\mathcal{L}_{\widetilde{\mathcal{H}}_i{}^j}\underline{\omega}_k{}^l = \delta^j_k\underline{\omega}_i{}^l - \delta^l_i\underline{\omega}_k{}^j\,, \quad \mathcal{L}_{\widetilde{\mathcal{H}}_i{}^j}\underline{e}_k = \delta^j_k\underline{e}_i\,.\label{eqn:affeqv2}
\end{equation}
Of course this form is sufficient only if \(H\) is connected, which is not the case for \(H = \mathrm{GL}(n, \mathbb{R})\). It is possible to circumvent this problem by considering only the oriented frame bundle \(\mathrm{GL}^+(M)\) equipped with a Cartan geometry modeled on \(G^+/H^+ = \mathrm{Aff}^+(n, \mathbb{R})/\mathrm{GL}^+(n, \mathbb{R})\), but this is possible only if \(M\) is orientable. In the more general, non-orientable case one may simply demand that in addition to the differential form of conditions~\ref{cartan:equivariant} and~\ref{cartan:equivariant2} as given above also their original form holds for one (and hence for any) reflection.

\subsection{Symmetries of affine connections}\label{subsec:affconnsym}
After discussing general properties of the affine model geometry we now briefly discuss the relation between Cartan geometries modeled on this Klein geometry and affine connections on \(M\), whose role in gravitational physics dates back to early attempts to unify gravity and electrodynamics~\cite{Einstein:1923a,Einstein:1923b,Eddington:1924}. Note that there are different ways to describe an affine connection. The most widely used description in gravitational physics is given in terms of a Koszul connection, or covariant derivative \(\nabla\) acting on tensor fields, or more generally speaking on sections of a vector bundle which is associated to the frame bundle~\cite{Carroll:2004}. In the context of gauge theories, however, a description in terms of a principal \(\mathrm{GL}(n, \mathbb{R})\) connection form \(\omega \in \Omega^1(P,\mathfrak{h})\) on the frame bundle \(P = \mathrm{GL}(M)\) is more common~\cite{Giachetta:2009}. For obvious reasons, we can immediately identify this principal connection with the $\mathfrak{h}$-valued part \(\omega\) of the Cartan connection on \(P\), since the defining properties of a principal connection are exactly conditions~\ref{cartan:equivariant} and~\ref{cartan:mcform}, restricted to \(\omega\). Conversely, any principal \(\mathrm{GL}(n, \mathbb{R})\) connection \(\omega\) on the frame bundle together with the solder form \(e\) given in equation~\eqref{eqn:solderform} yields a Cartan connection \(A = \omega + e\) on \(P\). One easily checks that \(A\) satisfies all conditions on a Cartan geometry.

Since in section~\ref{subsec:lie} we have introduced the notion of symmetry for a connection in terms of a covariant derivative \(\nabla\) with connection coefficients \(\Gamma^a{}_{bc}\), we briefly review the relation of this description to the formulation as a principal bundle connection \(\omega\), starting from the covariant derivative side. Let \(\gamma: \mathbb{R} \to P\) be a curve on \(P\) and \(z \in \mathfrak{z}\). Then \(\gamma_z: \mathbb{R} \to TM, t \mapsto \gamma(t)(z)\) is a curve on \(TM\), which we can also regard a vector field defined on the projected curve \(\pi \circ \gamma\) on \(M\). We can take the covariant derivative of this vector field along the curve and use the inverse frames \((\gamma(t))^{-1}\) to map the result of this operation back to \(\mathfrak{z}\). Finally, we define \(\omega(\gamma'(t))\) as the unique element of \(\mathfrak{h}\) such that
\begin{equation}
\mathrm{ad}(\omega(\gamma'(t)))(z) = (\gamma(t))^{-1}\left(\nabla_{(\pi \circ \gamma)'(t)}\gamma(t)(z)\right)
\end{equation}
for all \(z \in \mathfrak{z}\). Using the basis expansion~\eqref{eqn:affalgbasis} and the induced coordinates on the frame bundle introduced in section~\ref{subsec:lifts} the coordinate expression of the Cartan connection then takes the simple form
\begin{equation}\label{eqn:affcartconn2}
e^i = f^{-1}{}^i_adx^a\,, \quad \omega^i{}_j = f^{-1}{}^i_adf_j^a + f^{-1}{}^i_af_j^b\Gamma^a{}_{bc}dx^c\,,
\end{equation}
which are the well-known expressions for the solder form and the connection form of an affine connection on the frame bundle~\cite{Kobayashi:1963}. The associated vector fields are given by
\begin{equation}\label{eqn:afffundvect2}
\underline{e}_i = f_i^a\left(\partial_a - f_j^b\Gamma^c{}_{ba}\bar{\partial}^j_c\right) \,, \quad \underline{\omega}_i{}^j = f_i^a\bar{\partial}^j_a\,,
\end{equation}
where the former are the standard horizontal vector fields, and the latter the fundamental vector fields on the frame bundle~\cite{Kobayashi:1963}.

The opposite direction of this construction is done as follows. Consider the exterior covariant derivative \(d^{\omega}\), whose action on a general $H$-equivariant horizontal $k$-form \(\alpha\) on \(P\) with values in a representation vector space \(V\) with representation \(\rho: H \to \mathrm{GL}(V)\) is given by
\begin{equation}
d^{\omega}\alpha = d\alpha + \rho_*(\omega) \wedge \alpha\,,
\end{equation}
and yields a $H$-equivariant $V$-valued horizontal $(k + 1)$-form \(d^{\omega}\alpha\). Here, the term \emph{horizontal} means that these forms vanish on vertical vectors. Note that \(\rho_*: \mathfrak{h} \to \mathfrak{gl}(V)\), so that \(\rho_*(\omega)\) is a one-form on \(P\) with values in \(\mathfrak{gl}(V)\), which can thus applied to a $V$-valued $k$-form \(\alpha\). To relate this exterior covariant derivative to a covariant derivative on \(M\), let \(X\) be a vector field on \(M\) and define \(\bar{X}: P \to \mathfrak{z}\) as \(\bar{X}(p) = p^{-1}(X(\pi(p)))\). This is a $H$-equivariant $\mathfrak{z}$-valued (and vacuously horizontal) zero-form on \(P\), so that we can apply the exterior covariant derivative \(d^{\omega}\). Further, let \(Y\) be another vector field on \(M\) and \(Y^H\) its horizontal lift to \(P\), i.e., the unique vector field on \(P\) such that \(Y \circ \pi = \pi_* \circ Y^H\) and \(\omega(Y^H) = 0\). Then \(\bar{Z} = (d^{\omega}X)(Y^H)\) with \(\bar{Z}(p) = p^{-1}(Z(\pi(p)))\) for another vector field \(Z\) on \(M\). This vector field is the covariant derivative \(Z = \nabla_YX\). The connection coefficients follow by taking the coordinate vector fields \(\partial_a\) in place of \(X\) and \(Y\). A simple calculation shows that
\begin{equation}\label{eqn:cartconnaff}
\Gamma^a{}_{bc} = dx^a(\nabla_{\partial_c}\partial_b) = f_i^af^{-1}{}^j_b\omega^i{}_j(\partial_c)\,.
\end{equation}
Note that the right hand side of this equation apparently depends on the fiber coordinates \(f_i^a\) on \(P\), while the left hand side depends only on the manifold coordinates \(x^a\) on \(M\). However, the explicit dependence on \(f_i^a\) on the right hand side are exactly compensated by the equivariance (and thus implicit dependence on \(f_i^a\)) of \(\omega^i{}_j\), so that the right hand side is constant along the fibers of \(P\) and thus defines a function on the manifold \(M\).

With these relations we now come to the central proposition of this section, which we formulate as follows.
\begin{prop}\label{pro:affinv}
A Cartan geometry \((\pi: P \to M, A)\) modeled on the affine Klein geometry is invariant under a vector field \(\xi\) on \(M\) if and only if the corresponding affine connection \(\nabla\) is invariant under \(\xi\), \(\mathcal{L}_{\xi}\Gamma = 0\).
\end{prop}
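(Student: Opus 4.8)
The plan is to strip Definition~\ref{def:invcartgeom} down to a single equation on the connection form $\omega$ and then match that equation to $\mathcal{L}_\xi\Gamma = 0$ via the dictionary~\eqref{eqn:cartconnaff}. First I would note that for the affine model the admissible frame bundle is the full frame bundle $P = \mathrm{GL}(M)$, so the complete lift $\bar\xi$ of~\eqref{eqn:glclift} is by construction a vector field on $P$ and the tangency requirement of Definition~\ref{def:invcartgeom} is vacuous. Writing $A = \omega + e$ and recalling from the end of section~\ref{subsec:cartansym} that the solder form satisfies $\mathcal{L}_{\bar\xi}e = 0$ for every $\xi$, invariance of the Cartan geometry collapses to the single condition $\mathcal{L}_{\bar\xi}\omega = 0$.

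Next I would show that $\mathcal{L}_{\bar\xi}\omega$ is \emph{horizontal}, i.e. that it annihilates the vertical vectors $\bar\partial^i_a$. Since $\bar\varphi_t$ is a one-parameter group of principal bundle automorphisms it commutes with the $H$-action, so $\bar\xi$ commutes with every fundamental vector field, $[\bar\xi,\widetilde{\mathcal{H}}_i{}^j] = 0$. Using condition~\ref{cartan:mcform}, which gives the constant value $\omega(\widetilde{\mathcal{H}}_i{}^j) = \mathcal{H}_i{}^j$, together with the identity $(\mathcal{L}_{\bar\xi}\omega)(Y) = \bar\xi(\omega(Y)) - \omega([\bar\xi,Y])$, I obtain $(\mathcal{L}_{\bar\xi}\omega)(\widetilde{\mathcal{H}}_i{}^j) = \bar\xi\bigl(\omega(\widetilde{\mathcal{H}}_i{}^j)\bigr) - \omega([\bar\xi,\widetilde{\mathcal{H}}_i{}^j]) = 0$. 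As the fundamental vector fields $\widetilde{\mathcal{H}}_i{}^j = f_i^a\bar\partial^j_a$ span the vertical distribution and $\{\partial_a,\bar\partial^i_a\}$ span $T\mathrm{GL}(M)$, it follows that $\mathcal{L}_{\bar\xi}\omega = 0$ if and only if $(\mathcal{L}_{\bar\xi}\omega)(\partial_c) = 0$ for all $c$.

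The core step is to identify $(\mathcal{L}_{\bar\xi}\omega)(\partial_c)$ with $\mathcal{L}_\xi\Gamma$, for which I would invoke the naturality of the correspondence $\nabla\leftrightarrow\omega$ under frame bundle lifts: because $\bar\varphi_t(f) = \varphi_{t*}\circ f$, the horizontal distribution $\ker(\bar\varphi_t^*\omega)$ is the $\bar\varphi_{t*}$-preimage of $\ker\omega$, and tracing parallel transport through $\varphi_{t*}$ shows that $\bar\varphi_t^*\omega$ is exactly the principal connection form of the pulled-back covariant derivative $\varphi_t^*\nabla$. Feeding this connection into~\eqref{eqn:cartconnaff} gives $(\varphi_t^*\Gamma)^a{}_{bc} = f_i^a f^{-1}{}^j_b(\bar\varphi_t^*\omega)^i{}_j(\partial_c)$, and since the fibre coordinates $f_i^a$ are $t$-independent, differentiating at $t = 0$ produces
\begin{equation}
(\mathcal{L}_\xi\Gamma)^a{}_{bc} = f_i^a f^{-1}{}^j_b(\mathcal{L}_{\bar\xi}\omega)^i{}_j(\partial_c)\,.
\end{equation}
As the frames are invertible, the right-hand side vanishes for all $c$ precisely when $\mathcal{L}_\xi\Gamma = 0$; combined with the reduction of the previous paragraph this establishes both implications of the equivalence.

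I expect the main obstacle to be the naturality claim, namely proving rigorously that $\bar\varphi_t^*\omega$ is the connection form of $\varphi_t^*\nabla$ rather than of some other connection, since this requires carefully transporting the covariant-derivative-to-connection-form construction along $\bar\varphi_t$. Should a purely computational verification be preferred, the same identity follows by substituting the coordinate expressions~\eqref{eqn:affcartconn2} and the lift~\eqref{eqn:glclift} directly into $\mathcal{L}_{\bar\xi}\omega = \iota_{\bar\xi}d\omega + d(\iota_{\bar\xi}\omega)$ and checking that the horizontal part reproduces the tensor~\eqref{eqn:connlieder}; there the delicate bookkeeping is to see the inhomogeneous term $\partial_b\partial_c\xi^a$ of~\eqref{eqn:connlieder} emerge from the variation of the inverse frame $f^{-1}{}^i_a$.
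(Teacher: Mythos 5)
Your proposal is correct and follows essentially the same route as the paper's own proof: both observe that \(P = \mathrm{GL}(M)\) makes the tangency condition vacuous, use \(\mathcal{L}_{\bar{\xi}}e = 0\) to reduce invariance to \(\mathcal{L}_{\bar{\xi}}\omega = 0\), and then close the equivalence through the dictionary formulas~\eqref{eqn:affcartconn2} and~\eqref{eqn:cartconnaff}, arriving at the same key identity \((\mathcal{L}_{\xi}\Gamma)^a{}_{bc} = f_i^af^{-1}{}^j_b(\mathcal{L}_{\bar{\xi}}\omega^i{}_j)(\partial_c)\). The only differences are organizational: you establish horizontality of \(\mathcal{L}_{\bar{\xi}}\omega\) abstractly via \([\bar{\xi},\widetilde{\mathcal{H}}_i{}^j] = 0\) and justify the horizontal part by naturality of the \(\nabla\leftrightarrow\omega\) correspondence, whereas the paper obtains both facts at once from the single direct coordinate computation \(\mathcal{L}_{\bar{\xi}}\omega^i{}_j = f^{-1}{}^i_af_j^b(\mathcal{L}_{\xi}\Gamma)^a{}_{bc}\,dx^c\), which manifestly has no vertical component.
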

\begin{proof}
First, observe that \(P = \mathrm{GL}(M)\), so that the complete lift \(\bar{\xi}\) of any vector field \(\xi\) to \(\mathrm{GL}(M)\) is trivially tangent to \(P\). Further, the solder form \(e\) is invariant under \(\bar{\xi}\), as shown in section~\ref{subsec:cartansym}. We therefore only need to prove that \(\omega\) is invariant under \(\bar{\xi}\) whenever \(\nabla\) is invariant under \(\xi\) and vice versa. We can do so by direct calculation of the Lie derivative. From equation~\eqref{eqn:affcartconn2} we find
\begin{equation}
\mathcal{L}_{\bar{\xi}}\omega^i{}_j = \mathcal{L}_{\bar{\xi}}(f^{-1}{}^i_adf_j^a + f^{-1}{}^i_af_j^b\Gamma^a{}_{bc}dx^c) = f^{-1}{}^i_af_j^b(\mathcal{L}_{\xi}\Gamma)^a{}_{bc}dx^c\,,
\end{equation}
which proves the first part of the proposition. The proof of the second part of the proposition follows from equation~\eqref{eqn:cartconnaff}. Its Lie derivative is given by
\begin{equation}
(\mathcal{L}_{\xi}\Gamma)^a{}_{bc} = f_i^af^{-1}{}^j_b(\mathcal{L}_{\bar{\xi}}\omega^i{}_j)(\partial_c)\,,
\end{equation}
where the relations~\eqref{eqn:affmccan2} and~\eqref{eqn:affeqv} enter the calculation of the right hand side.
\end{proof}

This proof concludes our rather illustrative discussion of affine Cartan geometries. Making use of the constructions detailed in this section we can now discuss the symmetries of other model geometries more often encountered in gravitational physics, and of orthogonal Cartan geometries in particular. This will be done in the following section.

\section{Orthogonal Cartan geometries}\label{sec:orthogonal}
After discussing the general construction relating spacetime symmetries of an affine connection and its Cartan geometry we now come to the special case of metric compatible connections, which is the most common case in theories of gravity. These can most conveniently be described by Cartan geometries modeled on the orthogonal Klein geometry, which we briefly review in section~\ref{subsec:ortklein}. We then discuss different geometries based on this model. The most general geometry of this type is Riemann-Cartan geometry, which we discuss in section~\ref{subsec:rcsym}. Two more specialized examples will be Riemannian geometry discussed in section~\ref{subsec:riemsym} and Weizenb\"ock geometry discussed in section~\ref{subsec:weizsym}.

\subsection{The orthogonal model geometries}\label{subsec:ortklein}
The Cartan geometries we discuss in this section can be modeled on any of the following three Klein geometries \(G/H\), where \(G\) and \(H\) are the groups
\begin{equation}\label{eqn:ortgroups}
G = \begin{cases}
\mathrm{O}(m,n+1) & \text{for } \Lambda = 1\\
\mathrm{IO}(m,n) & \text{for } \Lambda = 0\\
\mathrm{O}(m+1,n) & \text{for } \Lambda = -1
\end{cases}\,, \quad H = \mathrm{O}(m,n)\,,
\end{equation}
where \(\mathrm{IO}(m,n) = \mathrm{O}(m,n) \ltimes \mathbb{R}^{m,n}\) is the generalized Poincar\'e group. Here we introduced a parameter \(\Lambda \in \{-1, 0, 1\}\), which indicates the scalar curvature of the corresponding homogeneous space \(G/H\), and can be interpreted as (the sign of) the cosmological constant~\cite{Gielen:2012fz}. This interpretation becomes clear for the case \(m = 1, n = 3\) conventionally encountered in gravitational physics. In this case the homogeneous spaces of the models listed above are de Sitter space, Minkowski space and anti de Sitter space, respectively. Each of these spaces is equipped with a metric \(\eta\) of signature \((m,n)\), whose sign we choose so that we have \(m\) times ``$-$'' and \(n\) times ``$+$''. All models are reductive, so that the Lie algebras \(\mathfrak{g}\) split in the form \(\mathfrak{g} = \mathfrak{h} \oplus \mathfrak{z}\) into subrepresentations of the adjoint representation of \(H\) on \(\mathfrak{g}\). This split allows us to expand \(a \in \mathfrak{g}\) in the form
\begin{equation}\label{eqn:ortalgbasis}
a = h + z = \frac{1}{2}h^i{}_j\mathcal{H}_i{}^j + z^i\mathcal{Z}_i\,,
\end{equation}
where \(\mathcal{H}_i{}^j\) are the generators of \(\mathfrak{h}\) and \(\mathcal{Z}_i\) are the generators of translations, which span \(\mathfrak{z}\). We introduced a factor \(\frac{1}{2}\) here for convenience. Note that \(\mathfrak{h}\) is always a subalgebra of \(\mathfrak{g}\), while \(\mathfrak{z}\) is a subalgebra only for \(\Lambda = 0\). This can also be read off from the Lie algebra relations, which take the form
\begin{gather}
[\mathcal{H}_i{}^j,\mathcal{H}_k{}^l] = \delta^j_k\mathcal{H}_i{}^l - \delta^l_i\mathcal{H}_k{}^j + \eta_{ik}\eta^{lm}\mathcal{H}_m{}^j - \eta^{jl}\eta_{km}\mathcal{H}_i{}^m\,,\label{eqn:ortalgebra}\\
[\mathcal{H}_i{}^j,\mathcal{Z}_k] = \delta^j_k\mathcal{Z}_i - \eta_{ik}\eta^{jl}\mathcal{Z}_l\,, \qquad [\mathcal{Z}_i,\mathcal{Z}_j] = \Lambda\eta_{ik}\mathcal{H}_j{}^k\,,\nonumber
\end{gather}
making use of the parameter \(\Lambda\).

Before we discuss Cartan geometries based on this model Klein geometry, we make a few general statements on the structure of the admissible frame bundle \(P \subset \mathrm{GL}(M)\). Note that \(\mathfrak{z}\) is naturally equipped with a metric \(\eta(\mathcal{Z}_i, \mathcal{Z}_j) = \eta_{ij}\), which is invariant under the adjoint representation of \(H\). Since the group action of \(h \in H\) on a frame \(p \in P_x\), which is a linear bijection \(p: \mathfrak{z} \to T_xM\) with \(x = \pi(p) \in M\), is given by the adjoint representation \(R_hp = p \circ \mathrm{Ad}(h)\) as a consequence of the construction detailed in section~\ref{subsec:forcartan}, it follows that \(T_xM\) is equipped with a non-degenerate metric \(g(.,.) = \eta(p^{-1}(.), p^{-1}(.))\) of signature \((m,n)\), which is invariant under the action of \(H\) on \(P\) and thus independent of the choice of \(p \in P_x\). This turns \(M\) into a metric manifold. It follows further that \(P\) agrees with the orthonormal frame bundle, which can be written as
\begin{equation}\label{eqn:ortframebundle}
P = \mathrm{O}(M, g) = \bigcup_{x \in M}\{\text{linear isometries } p: \mathfrak{z} \to T_xM\} \subset \mathrm{GL}(M)\,.
\end{equation}
Using the coordinates \((x,f)\) introduced in section~\ref{subsec:lifts} it follows that
\begin{equation}\label{eqn:ortframebundle2}
(x,f) \in P \quad \Leftrightarrow \quad f_i^af_j^bg_{ab}(x) = \eta_{ij}\,.
\end{equation}
We will use the coordinates \((x^a,f_i^a)\) also to denote elements of \(P\), taking into account that only those frames are admissible which satisfy the condition given above. We finally remark that in the case \(\Lambda = 0\) corresponding to the group \(G = \mathrm{IO}(m,n)\) the algebra relations are invariant under a rescaling \(\mathcal{Z}_i \mapsto \lambda\mathcal{Z}_i\), so that the metric \(\eta\) and thus also \(g\) are determined only up to a constant factor.

We now come to the description of Cartan geometries on the orthonormal frame bundle. Using the component expansion introduced above, we write the Cartan connection as
\begin{equation}\label{eqn:ortcartconn}
A = \omega + e = \frac{1}{2}\omega^i{}_j\mathcal{H}_i{}^j + e^i\mathcal{Z}_i\,,
\end{equation}
while the associated vector fields take the form
\begin{equation}\label{eqn:ortfundvect}
\underline{A}\left(\frac{1}{2}h^i{}_j\mathcal{H}_i{}^j + z^i\mathcal{Z}_i\right) = \frac{1}{2}h^i{}_j\underline{\omega}_i{}^j + z^i\underline{e}_i\,.
\end{equation}
Note that we used the same factor \(\frac{1}{2}\) here which we introduced in the expansion~\eqref{eqn:ortalgbasis} and which we did not use in the corresponding definitions~\eqref{eqn:affcartconn} and~\eqref{eqn:afffundvect} for the affine model. The one-forms \(\mu_x \in \Omega^1(P_x, \mathfrak{h})\) induced by the Maurer-Cartan form \(\mu \in \Omega^1(H, \mathfrak{h})\) and the fundamental vector fields \(\widetilde{\mathcal{H}}_i{}^j\) of the generators \(\mathcal{H}_i{}^j\) are given by
\begin{equation}\label{eqn:ortmccan}
\mu_x = \frac{1}{2}f^{-1}{}^i_adf_j^a\mathcal{H}_i{}^j \quad \text{and} \quad \widetilde{\mathcal{H}}_i{}^j = f_i^a\bar{\partial}^j_a - \eta_{ik}\eta^{jl}f_l^a\bar{\partial}^k_a\,.
\end{equation}
Making use of these expressions, we can write the conditions~\ref{cartan:mcform} and~\ref{cartan:canonical} in the form
\begin{equation}\label{eqn:ortmccan2}
e^i(\bar{\partial}^j_a - \eta_{mk}\eta^{jl}f^{-1}{}^m_af_l^b\bar{\partial}^k_b) = 0\,, \quad \omega^i{}_j(\bar{\partial}^k_a - \eta_{nl}\eta^{km}f^{-1}{}^n_af_m^b\bar{\partial}^l_b) = \delta^k_jf^{-1}{}^i_a - \eta_{jl}\eta^{ik}f^{-1}{}^l_a
\end{equation}
and
\begin{equation}\label{eqn:ortmccan3}
\underline{\omega}_i{}^j = f_i^a\bar{\partial}^j_a - \eta_{ik}\eta^{jl}f_l^a\bar{\partial}^k_a\,.
\end{equation}
Here we have written the vertical vector fields on \(P\) in the form \(\bar{\partial}^j_a - \eta_{mk}\eta^{jl}f^{-1}{}^m_af_l^b\bar{\partial}^k_b\). Note that we cannot simply use the coordinate vector fields \(\bar{\partial}^j_a\), as these are not tangent to \(P\). We finally express the conditions~\ref{cartan:equivariant} and~\ref{cartan:equivariant2} in differential form. Condition~\ref{cartan:equivariant} then reads
\begin{gather}
\mathcal{L}_{\widetilde{\mathcal{H}}_i{}^j}\omega = -\frac{1}{2}\left(\omega^j{}_k\mathcal{H}_i{}^k - \omega^k{}_i\mathcal{H}_k{}^j + \omega^k{}_l\eta_{ik}\eta^{lm}\mathcal{H}_m{}^j - \omega^k{}_l\eta^{jl}\eta_{km}\mathcal{H}_i{}^m\right)\,,\nonumber\\
\mathcal{L}_{\widetilde{\mathcal{H}}_i{}^j}e = -e^j\mathcal{Z}_i + e^k\eta_{ik}\eta^{jl}\mathcal{Z}_l\,,\label{eqn:orteqv}
\end{gather}
while condition~\ref{cartan:equivariant2} is given by
\begin{equation}
\mathcal{L}_{\widetilde{\mathcal{H}}_i{}^j}\underline{\omega}_k{}^l = \delta^j_k\underline{\omega}_i{}^l - \delta^l_i\underline{\omega}_k{}^j + \delta_{ik}\delta^{lm}\underline{\omega}_m{}^j - \delta^{jl}\delta_{km}\underline{\omega}_i{}^m\,, \quad \mathcal{L}_{\widetilde{\mathcal{H}}_i{}^j}\underline{e}_k = \delta^j_k\underline{e}_i - \delta_{ik}\delta^{jl}\underline{e}_l\,.\label{eqn:orteqv2}
\end{equation}
As we also remarked regarding the affine model geometry at the end of section~\ref{subsec:affklein}, the differential conditions given above are sufficient only if \(H\) is connected, and otherwise must be supplemented with the original conditions~\ref{cartan:equivariant} and~\ref{cartan:equivariant2} as given in section~\ref{subsec:cartan}.

\subsection{Symmetries for Riemann-Cartan spacetimes}\label{subsec:rcsym}
We now come to the description of particular spacetime geometries based on the model geometry detailed in the previous section, and to their symmetries formulated in the language of Cartan geometry. The first and most general example we discuss here is Riemann-Cartan spacetime, which provides the geometric background of Einstein-Cartan theory~\cite{Cartan:1922,Cartan:1923zea,Kibble:1961ba,Sciama:1964wt,Hehl:1976kj,Hehl:1994ue,Blagojevic:2002du,Blagojevic:2013xpa,Trautman:2006fp}, and which we define as follows. Let \(M\) be a manifold of dimension \(m + n\) equipped with a metric \(g\) of signature \((m,n)\) and a metric compatible connection \(\nabla\). The connection coefficients of \(\nabla\) then take the general form
\begin{equation}\label{eqn:rcgamma}
\Gamma^a{}_{bc} = \frac{1}{2}g^{ad}(\partial_bg_{dc} + \partial_cg_{bd} - \partial_dg_{bc} - g_{be}T^e{}_{cd} - g_{ce}T^e{}_{bd}) + \frac{1}{2}T^a{}_{cb}\,,
\end{equation}
where \(T^a{}_{bc}\) is the torsion tensor~\eqref{eqn:torsion}. Note that in contrast to Riemannian geometry, where the torsion-free Levi-Civita connection is chosen, the connection of Riemann-Cartan geometry has in general non-vanishing torsion.

The geometry of Riemann-Cartan spacetime can now easily be written in terms of a Cartan connection modeled on the orthogonal Klein geometry displayed in the preceding section. Let \(P = \mathrm{O}(M, g)\) denote the orthogonal frame bundle defined by the metric \(g\). The metric compatible covariant derivative with coefficients~\eqref{eqn:rcgamma} gives rise to a connection \(\omega \in \Omega^1(P, \mathfrak{h})\), where \(\mathfrak{h}\) is the Lie algebra of the Lorentz group. Together with the solder form \(e\) on \(P\), \(\omega\) combines into a Cartan connection. This construction is the same is displayed already in section~\ref{subsec:affconnsym}, so we do not repeat it here, and simply display the basis expressions for the Cartan connection
\begin{equation}\label{eqn:rccartconn}
e^i = f^{-1}{}^i_adx^a\,, \quad \omega^i{}_j = f^{-1}{}^i_adf_j^a + f^{-1}{}^i_af_j^b\Gamma^a{}_{bc}dx^c\,,
\end{equation}
and the associated vector fields
\begin{equation}\label{eqn:rcfundvect}
\underline{e}_i = f_i^a\left(\partial_a - f_j^b\Gamma^c{}_{ba}\bar{\partial}^j_c\right) \,, \quad \underline{\omega}_i{}^j = f_i^a\bar{\partial}^j_a - \eta_{ik}\eta^{jl}f_l^a\bar{\partial}^k_a\,.
\end{equation}
Conversely, every Cartan connection can be written in terms of a metric and a metric compatible connection. As we have argued in section~\ref{subsec:ortklein}, the admissible frame bundle \(P \subset \mathrm{GL}(M)\) of an orthogonal Cartan geometry determines a metric \(g\) on \(M\) such that \(P = \mathrm{O}(M, g)\). Finally, the $\mathfrak{h}$-valued part \(\omega\) of the Cartan connection is a principal connection on \(P\), and thus gives rise to a metric compatible connection on \(M\). The metric and the connection coefficients obtained from this construction take the form
\begin{equation}\label{eqn:cartconnrc}
g_{ab} = \eta_{ij}e^i(\partial_a)e^j(\partial_b)\,, \quad \Gamma^a{}_{bc} = dx^a(\nabla_{\partial_c}\partial_b) = f_i^af^{-1}{}^j_b\omega^i{}_j(\partial_c)\,.
\end{equation}
We make use of these relations to prove the following proposition.

\begin{prop}\label{pro:rcinv}
A Cartan geometry \((\pi: P \to M, A)\) modeled on the orthogonal Klein geometry is invariant under a vector field \(\xi\) on \(M\) if and only if the metric and torsion of the associated Riemann-Cartan geometry are invariant, \(\mathcal{L}_{\xi}g = 0\) and \(\mathcal{L}_{\xi}T = 0\).
\end{prop}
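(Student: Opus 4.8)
The plan is to reuse the structure of the affine case (Proposition~\ref{pro:affinv}) and then to separate the two conditions \(\mathcal{L}_\xi g = 0\) and \(\mathcal{L}_\xi T = 0\) according to which part of the Cartan connection they control. Since the admissible frame bundle is \(P = \mathrm{O}(M,g)\) and \(e\) is the solder form, which is invariant under the frame bundle lift of any diffeomorphism, invariance of the Cartan geometry reduces (as explained at the end of section~\ref{subsec:cartansym}) to two requirements: that the complete lift \(\bar{\xi}\) be tangent to \(P \subset \mathrm{GL}(M)\), and that \(\mathcal{L}_{\bar{\xi}}\omega = 0\) on \(P\). I would prove the proposition by identifying the first requirement with \(\mathcal{L}_\xi g = 0\) and, assuming it, the second with \(\mathcal{L}_\xi T = 0\).

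First I would show that tangency is equivalent to \(\mathcal{L}_\xi g = 0\). The bundle \(P\) is cut out of \(\mathrm{GL}(M)\) by the constraint \(f_i^a f_j^b g_{ab}(x) = \eta_{ij}\) of equation~\eqref{eqn:ortframebundle2}, so \(\bar{\xi}\) is tangent to \(P\) iff it annihilates this constraint along \(P\). Inserting the coordinate form~\eqref{eqn:glclift} of \(\bar{\xi}\) and using that the \(f_i^a\) are constant along the base directions while \(g_{ab}\) is constant along the fibres, a one-line computation gives \(\bar{\xi}(f_i^a f_j^b g_{ab}) = f_i^a f_j^b (\mathcal{L}_\xi g)_{ab}\). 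As the admissible frames \(f_i^a\) are invertible, this vanishes on \(P\) exactly when \(\mathcal{L}_\xi g = 0\).

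Assuming tangency, I would then invoke the computation of Proposition~\ref{pro:affinv}: the coordinate expression~\eqref{eqn:rccartconn} of \(\omega\) coincides with the affine one~\eqref{eqn:affcartconn2}, so the same manipulation yields the identity of one-forms \(\mathcal{L}_{\bar{\xi}}\omega^i{}_j = f^{-1}{}^i_a f_j^b (\mathcal{L}_\xi\Gamma)^a{}_{bc}\,dx^c\), now read on \(P\). The new feature relative to the affine case is that \(\omega\) is here \(\mathfrak{o}(m,n)\)-valued, i.e. \(\omega_{ij} = \eta_{ik}\omega^k{}_j\) is antisymmetric, so \(\mathcal{L}_{\bar{\xi}}\omega\) a priori constrains only the antisymmetric part of \((\mathcal{L}_\xi\Gamma)_{dbc} = g_{da}(\mathcal{L}_\xi\Gamma)^a{}_{bc}\) in its first two indices. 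However, metric compatibility together with \(\mathcal{L}_\xi g = 0\) forces the symmetric part to vanish: commuting the Lie derivative through the identity \(\nabla g = 0\) gives \((\mathcal{L}_\xi\Gamma)_{dbc} + (\mathcal{L}_\xi\Gamma)_{bdc} = \nabla_c(\mathcal{L}_\xi g)_{db} = 0\), so \((\mathcal{L}_\xi\Gamma)_{dbc}\) is antisymmetric in \((d,b)\). Since the frames are invertible, \(\mathcal{L}_{\bar{\xi}}\omega = 0\) is therefore equivalent to \(\mathcal{L}_\xi\Gamma = 0\).

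It remains to replace \(\mathcal{L}_\xi\Gamma = 0\) by \(\mathcal{L}_\xi T = 0\) under the standing assumption \(\mathcal{L}_\xi g = 0\), and this is where I expect the main work. One implication is immediate: since \(T^a{}_{bc} = \Gamma^a{}_{cb} - \Gamma^a{}_{bc}\) and the inhomogeneous term \(\partial_b\partial_c\xi^a\) in~\eqref{eqn:connlieder} is symmetric in \((b,c)\), antisymmetrisation gives \((\mathcal{L}_\xi T)^a{}_{bc} = (\mathcal{L}_\xi\Gamma)^a{}_{cb} - (\mathcal{L}_\xi\Gamma)^a{}_{bc}\), which vanishes once \(\mathcal{L}_\xi\Gamma = 0\). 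The converse is the delicate point and hinges on combining two symmetries: from the previous paragraph \((\mathcal{L}_\xi\Gamma)_{dbc}\) is antisymmetric in \((d,b)\), while \(\mathcal{L}_\xi T = 0\) says it is symmetric in \((b,c)\). A tensor with both symmetries must vanish by the standard cyclic argument \(S_{dbc} = S_{dcb} = -S_{cdb} = -S_{cbd} = S_{bcd} = S_{bdc} = -S_{dbc}\), hence \(\mathcal{L}_\xi\Gamma = 0\). Chaining the three steps, invariance of the Cartan geometry is equivalent to \(\mathcal{L}_\xi g = 0\) together with \(\mathcal{L}_\xi T = 0\), as claimed; the crux is that metric compatibility is exactly what makes the single antisymmetric datum \(\mathcal{L}_{\bar{\xi}}\omega\) equivalent to the full \(\mathcal{L}_\xi\Gamma\) and lets the torsion condition close the argument.
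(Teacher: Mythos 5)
Your proposal is correct, and its skeleton matches the paper's proof: tangency of \(\bar{\xi}\) to \(P=\mathrm{O}(M,g)\) is identified with \(\mathcal{L}_{\xi}g=0\) via \(\bar{\xi}(f_i^af_j^bg_{ab}) = f_i^af_j^b(\mathcal{L}_{\xi}g)_{ab}\), and \(\mathcal{L}_{\bar{\xi}}\omega=0\) is identified with \(\mathcal{L}_{\xi}\Gamma=0\) by the same computation as in proposition~\ref{pro:affinv}. Where you genuinely diverge is the last step. The paper substitutes the explicit Riemann--Cartan form~\eqref{eqn:rcgamma} of \(\Gamma\) into \(\mathcal{L}_{\xi}\Gamma\) and verifies by a somewhat lengthy expansion that it vanishes when \(\mathcal{L}_{\xi}g=0\) and \(\mathcal{L}_{\xi}T=0\), with the converse being immediate from~\eqref{eqn:torsion}. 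You instead argue structurally: commuting \(\mathcal{L}_{\xi}\) through \(\nabla g=0\) gives \((\mathcal{L}_{\xi}\Gamma)_{dbc}+(\mathcal{L}_{\xi}\Gamma)_{bdc}=\nabla_c(\mathcal{L}_{\xi}g)_{db}=0\), so the lowered tensor is antisymmetric in its first pair, while \(\mathcal{L}_{\xi}T=0\) makes it symmetric in its last pair, and the standard six-step index cycle then kills it. This is cleaner and makes transparent \emph{why} metric plus torsion suffice to control the full connection, at the modest cost of invoking the Lie-derivative/covariant-derivative commutator identity. One small remark: your worry that \(\mathcal{L}_{\bar{\xi}}\omega\) ``a priori constrains only the antisymmetric part'' of \(\mathcal{L}_{\xi}\Gamma\) is not actually an obstruction --- the identity \(\mathcal{L}_{\bar{\xi}}\omega^i{}_j = f^{-1}{}^i_af_j^b(\mathcal{L}_{\xi}\Gamma)^a{}_{bc}dx^c\) holds for the full matrix of components and the frames are invertible, so its vanishing already forces all of \(\mathcal{L}_{\xi}\Gamma\) to vanish; your antisymmetry observation is nonetheless a correct consistency check (it is what guarantees the right-hand side is \(\mathfrak{h}\)-valued) and is in any case the ingredient you reuse in the final cycling argument.
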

\begin{proof}
Recall from the definition in section~\ref{subsec:cartansym} that a first-order reductive Cartan geometry \((\pi: P \to M, A)\), where \(P\) is a subbundle of the frame bundle of \(M\), is invariant under \(\xi\) if and only if the complete lift \(\bar{\xi}\) of \(\xi\) to the frame bundle is tangent to \(P\) and preserves the Cartan connection, \(\mathcal{L}_{\bar{\xi}}A = 0\). In the case of orthogonal Cartan geometries discussed here we have already seen that there exists a one-to-one correspondence between non-degenerate metrics on \(M\) and choices of the admissible frame bundle \(P\). It follows from the relation~\eqref{eqn:ortframebundle2} that \(\bar{\xi}\) is tangent to \(P\) if and only if
\begin{equation}\label{eqn:orttanglift}
0 = \bar{\xi}(f_i^af_j^bg_{ab}) = f_i^af_j^b(\xi^c\partial_cg_{ab} + \partial_a\xi^cg_{cb} + \partial_b\xi^cg_{ac}) = f_i^af_j^b(\mathcal{L}_{\xi}g)_{ab}\,,
\end{equation}
and thus if and only if \(\mathcal{L}_{\xi}g = 0\).

We finally need to show that, given that equation~\eqref{eqn:orttanglift} is satisfied, the torsion is invariant under \(\xi\) if and only if \(A\) is invariant under \(\bar{\xi}\). A direct calculation of \(\mathcal{L}_{\bar{\xi}}\omega\) in terms of \(\mathcal{L}_{\xi}\Gamma\), in analogy to the proof of proposition~\ref{pro:affinv}, shows that \(A\) is invariant under \(\bar{\xi}\) if and only if \(\mathcal{L}_{\xi}\Gamma = 0\). Finally, from equation~\eqref{eqn:rcgamma} one derives
\begin{multline}
(\mathcal{L}_{\xi}\Gamma)^a{}_{bc} = -\frac{1}{2}g^{ae}g^{fd}(\mathcal{L}_{\xi}g)_{ef}(\partial_bg_{dc} + \partial_cg_{bd} - \partial_dg_{bc} - g_{be}T^e{}_{cd} - g_{ce}T^e{}_{bd}) + \frac{1}{2}(\mathcal{L}_{\xi}T)^a{}_{cb} + \frac{1}{2}g^{ad}\\
\cdot \left[\partial_b(\mathcal{L}_{\xi}g)_{dc} + \partial_c(\mathcal{L}_{\xi}g)_{bd} - \partial_d(\mathcal{L}_{\xi}g)_{bc} - (\mathcal{L}_{\xi}g)_{be}T^e{}_{cd} - (\mathcal{L}_{\xi}g)_{ce}T^e{}_{bd} - g_{be}(\mathcal{L}_{\xi}T)^e{}_{cd} - g_{ce}(\mathcal{L}_{\xi}T)^e{}_{bd}\right]\,,
\end{multline}
which vanishes if \(\mathcal{L}_{\xi}g = 0\) and \(\mathcal{L}_{\xi}T = 0\). Conversely, if \(\mathcal{L}_{\xi}\Gamma\) vanishes, it follows that also \(\mathcal{L}_{\xi}T\) vanishes due to equation~\eqref{eqn:torsion}.
\end{proof}

This concludes our discussion of Riemann-Cartan geometry. As we have seen, the notion of symmetry of a Riemann-Cartan geometry translates into the language of Cartan geometry in analogy to that of affine geometry discussed in the previous section.

\subsection{Symmetries for Riemannian spacetimes}\label{subsec:riemsym}
The most widely used subclass of Riemann-Cartan spacetime is Riemannian spacetime, which is obtained by fixing vanishing torsion, so that the coefficients~\eqref{eqn:rcgamma} of the connection \(\nabla\) reduce to the Levi-Civita connection
\begin{equation}\label{eqn:riemgamma}
\Gamma^a{}_{bc} = \frac{1}{2}g^{ad}(\partial_bg_{dc} + \partial_cg_{bd} - \partial_dg_{bc})\,.
\end{equation}
Despite its importance for gravitational physics, in particular for general relativity and related theories~\cite{Einstein:1915ca,Carroll:2004}, we only briefly discuss symmetries of Riemannian spacetimes in this article, since any statements we could make here follow immediately from the discussion of Riemann-Cartan spacetimes in the preceding section by setting \(T^a{}_{bc} \equiv 0\). For the Cartan geometry \((\pi: P \to M, A)\) this is equivalent to the condition
\begin{equation}
F = dA + \frac{1}{2}[A, A] \in \Omega^2(P, \mathfrak{h})\,,
\end{equation}
i.e., the $\mathfrak{z}$-valued part of the Cartan curvature \(F\) vanishes. Therefore, we only state the following proposition:

\begin{prop}\label{pro:rieminv}
A torsion-free Cartan geometry \((\pi: P \to M, A)\) modeled on the orthogonal Klein geometry is invariant under a vector field \(\xi\) on \(M\) if and only if the associated metric is invariant, \(\mathcal{L}_{\xi}g = 0\).
\end{prop}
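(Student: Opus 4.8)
The plan is to obtain this result as an immediate specialization of Proposition~\ref{pro:rcinv} to the case of vanishing torsion. Recall that the torsion-free condition stated above—the vanishing of the $\mathfrak{z}$-valued part of the Cartan curvature $F = dA + \frac{1}{2}[A,A]$—is equivalent to $T^a{}_{bc} \equiv 0$, so that a torsion-free orthogonal Cartan geometry is precisely a Riemann-Cartan geometry whose connection reduces to the Levi-Civita form~\eqref{eqn:riemgamma}. Since the underlying bundle $P = \mathrm{O}(M,g)$ and the notion of invariance from Definition~\ref{def:invcartgeom} are unchanged, I would simply trace through the two directions of Proposition~\ref{pro:rcinv} and observe that the torsion condition becomes vacuous.

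For the ``only if'' direction, suppose the Cartan geometry is invariant under $\xi$. By Definition~\ref{def:invcartgeom} the complete lift $\bar{\xi}$ is then tangent to $P = \mathrm{O}(M,g)$, and by the computation~\eqref{eqn:orttanglift} this tangency is already equivalent to $\mathcal{L}_{\xi}g = 0$. Thus the metric invariance follows independently of any statement about the connection.

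For the converse, assume $\mathcal{L}_{\xi}g = 0$. The same computation~\eqref{eqn:orttanglift} shows that $\bar{\xi}$ is tangent to $P$. It then remains to verify $\mathcal{L}_{\bar{\xi}}A = 0$; since the solder form $e$ is automatically preserved, it suffices to check $\mathcal{L}_{\bar{\xi}}\omega = 0$, which by the argument of Proposition~\ref{pro:affinv} is equivalent to $\mathcal{L}_{\xi}\Gamma = 0$. Here I would invoke the master formula for $(\mathcal{L}_{\xi}\Gamma)^a{}_{bc}$ derived in the proof of Proposition~\ref{pro:rcinv} and set $T \equiv 0$: every surviving term is built from $(\mathcal{L}_{\xi}g)_{ab}$ and its first derivatives, so $\mathcal{L}_{\xi}g = 0$ forces $\mathcal{L}_{\xi}\Gamma = 0$. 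Equivalently, one may simply note that $\mathcal{L}_{\xi}T = 0$ holds trivially because $T$ vanishes identically, whence the full hypothesis of Proposition~\ref{pro:rcinv} is met and invariance follows.

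The argument involves no genuine obstacle; the only conceptual point worth isolating is that the torsion-invariance requirement of Proposition~\ref{pro:rcinv} is automatically satisfied within the torsion-free class, which is precisely why a single Killing condition on $g$ suffices. This reflects the naturality of the Levi-Civita connection: because $\Gamma$ is determined functorially by $g$ through~\eqref{eqn:riemgamma}, every diffeomorphism preserving $g$ preserves $\Gamma$, so the symmetries of the connection coincide with the isometries of the metric.
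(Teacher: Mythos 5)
Your proposal is correct and follows exactly the paper's route: the paper proves this proposition in one line by specializing Proposition~\ref{pro:rcinv} to \(T = 0\), which is precisely what you do, only with the two directions and the vacuousness of the torsion-invariance condition spelled out explicitly. The added detail is accurate but not a different argument.
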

\begin{proof}
This proposition follows immediately from proposition~\ref{pro:rcinv} for \(T = 0\).
\end{proof}

Thus, our formulation of symmetries in the language of Cartan geometry in particular includes the well-known notion of symmetries of a Riemannian spacetime in terms of Killing vector fields.

\subsection{Symmetries for Weizenb\"ock spacetimes}\label{subsec:weizsym}
The last example geometry based on the orthogonal model which we discuss here is Weizenb\"ock geometry. Although it has been used already by Einstein as the geometric background for his teleparallel gravity theory~\cite{Einstein:1928}, it has not received much attention until recently~\cite{Blagojevic:2002du,Blagojevic:2013xpa,Aldrovandi:2013wha,Baez:2012bn}. In its classical form it is based on the assumption that spacetime is a parallelizable manifold \(M\), so that it can be equipped with a global section of the frame bundle, the tetrad \(\theta\). The tetrad defines a notion of distant parallelism, or teleparallelism, by providing a way to compare tangent vectors to different points. The connection \(\nabla\) associated with this teleparallelism is the Weizenb\"ock connection with coefficients
\begin{equation}\label{eqn:weizgamma}
\Gamma^a{}_{bc} = \theta_i^a\partial_c\theta^i_b\,
\end{equation}
with respect to which the tetrad is constant. By construction it has vanishing curvature, but in general non-vanishing torsion. The tetrad further defines a metric,
\begin{equation}\label{eqn:weizmetric}
g_{ab} = \theta^i_a\theta^j_b\eta_{ij}\,,
\end{equation}
which is also constant with respect to the Weizenb\"ock connection.

It should now be clear how to construct a Cartan geometry modeled on the orthogonal Klein geometry from a given Weizenb\"ock geometry. The admissible frame bundle is simply the orthogonal frame bundle \(P = \mathrm{O}(M, g)\) defined by the metric \(g\), while the Cartan connection \(A\) is derived from the Weizenb\"ock connection following equations~\eqref{eqn:rccartconn}, and~\eqref{eqn:rcfundvect} for the associated vector fields. It should be noted that the Cartan geometry obtained from this construction does not change if we apply a global (i.e., independent of the position in spacetime) Lorentz transform \(\theta^i_a \mapsto L^i{}_j\theta^j_a\) with constant \(L \in \mathrm{O}(m,n)\) to the tetrad, since both the metric and the Weizenb\"ock connection are invariant under this transformation. Further, note that in this case the Cartan curvature satisfies
\begin{equation}
F = dA + \frac{1}{2}[A, A] \in \Omega^2(P, \mathfrak{z})\,,
\end{equation}
i.e., the $\mathfrak{h}$-valued component of \(F\), corresponding to the curvature of the Weizenb\"ock connection, vanishes. We remark that also more sophisticated constructions exist, which turn Weizenb\"ock geometry into a higher Cartan geometry, in the sense of 2-categories~\cite{Baez:2012bn}.

A less obvious construction is the derivation of a Weizenb\"ock geometry from a given orthogonal Cartan geometry \((\pi: P \to M, A)\). Using the same construction as given in section~\ref{subsec:rcsym} for a Riemann-Cartan geometry, one obtains the metric \(g\) and the coefficients of the connection \(\nabla\) using equation~\eqref{eqn:cartconnrc}. However, it is not possible to uniquely determine the tetrad from the Cartan geometry, but only up to a global Lorentz transform. To see this, choose a tetrad \(\theta(x)\) at some point \(x \in M\) which is orthonormal with respect to the metric \(g\), and use the connection \(\nabla\) to define \(\theta\) on all of \(M\) via parallel transport. Here we use the fact that the connection \(\nabla\) is flat, so that the parallel transport does not depend on the chosen path. Since \(\theta\) is covariantly constant by construction,
\begin{equation}
0 = \nabla_a\theta^i_b = \partial_a\theta^i_b - \Gamma^c{}_{ba}\theta^i_c\,,
\end{equation}
it follows immediately that \(\nabla\) is indeed the Weizenb\"ock connection with coefficients~\eqref{eqn:weizgamma}, so that we have constructed a Weizenb\"ock geometry.

From the fact that a global Lorentz transform of the tetrad \(\theta\) does not change the corresponding Cartan geometry one can already deduce that the Cartan geometry is invariant under diffeomorphisms which change the tetrad by a global Lorentz transform. For the purpose of this article we can thus state:

\begin{prop}\label{pro:weizinv}
The Cartan geometry \((\pi: P \to M, A)\) corresponding to a Weizenb\"ock geometry with tetrad \(\theta\) is invariant under a vector field \(\xi\) if and only if \(\xi\) acts on \(\theta\) as the generator of a global Lorentz transform, i.e., such that \((\mathcal{L}_{\xi}\theta)^i_a = \lambda^i{}_j\theta^j_a\) with constant \(\lambda \in \mathfrak{h} = \mathfrak{o}(m,n)\).
\end{prop}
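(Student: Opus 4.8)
The plan is to reduce everything to Proposition~\ref{pro:rcinv}, since a Weizenb\"ock geometry is just a Riemann-Cartan geometry whose metric compatible connection~\eqref{eqn:weizgamma} happens to be flat. By that proposition the Cartan geometry is invariant under $\xi$ if and only if $\mathcal{L}_{\xi}g = 0$ and $\mathcal{L}_{\xi}T = 0$, and, as established inside its proof, given $\mathcal{L}_{\xi}g = 0$ the torsion condition is equivalent to $\mathcal{L}_{\xi}\Gamma = 0$. I would therefore aim to show that the pair of conditions $\mathcal{L}_{\xi}g = 0$ and $\mathcal{L}_{\xi}\Gamma = 0$ is equivalent to the stated global Lorentz condition on $\theta$. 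Since $\theta^i = \theta^i_a\,dx^a$ is a coframe, its Lie derivative can always be expanded in it, so I would write $(\mathcal{L}_{\xi}\theta)^i_a = \kappa^i{}_j\theta^j_a$, defining a matrix field $\kappa^i{}_j = (\mathcal{L}_{\xi}\theta)^i_a\theta^a_j$ on $M$ in terms of the inverse tetrad $\theta^a_i$. The claim then becomes precisely that $\kappa$ is constant and lies in $\mathfrak{o}(m,n)$, and I would treat the algebraic (antisymmetry) and differential (constancy) parts separately.

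For antisymmetry I would use $\mathcal{L}_{\xi}g = 0$. Applying the Leibniz rule to $g_{ab} = \theta^i_a\theta^j_b\eta_{ij}$ from~\eqref{eqn:weizmetric} gives $(\mathcal{L}_{\xi}g)_{ab} = (\mathcal{L}_{\xi}\theta)^i_a\theta^j_b\eta_{ij} + \theta^i_a(\mathcal{L}_{\xi}\theta)^j_b\eta_{ij}$; contracting with $\theta^a_m\theta^b_n$ and substituting the expansion of $\mathcal{L}_{\xi}\theta$ shows that $\mathcal{L}_{\xi}g = 0$ is equivalent to $\kappa_{mn} + \kappa_{nm} = 0$ with $\kappa_{mn} = \eta_{mi}\kappa^i{}_n$, i.e.\ to $\kappa \in \mathfrak{o}(m,n)$ pointwise.

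For constancy I would combine $\mathcal{L}_{\xi}\Gamma = 0$ with the defining relation $\nabla_a\theta^i_b = 0$. The key tool is that when $\mathcal{L}_{\xi}\Gamma = 0$ the Lie derivative commutes with the covariant derivative on tensorial objects, the obstruction being exactly a $\mathcal{L}_{\xi}\Gamma$ term of the type appearing in~\eqref{eqn:connlieder}, while the frame index $i$ is carried along as an inert label. Hence $0 = \mathcal{L}_{\xi}(\nabla_a\theta^i_b) = \nabla_a(\mathcal{L}_{\xi}\theta)^i_b = (\partial_a\kappa^i{}_j)\theta^j_b$, using $\nabla_a\theta^j_b = 0$ and the fact that $\kappa^i{}_j$ bears no spacetime index; contracting with $\theta^b_k$ then yields $\partial_a\kappa^i{}_k = 0$, so $\kappa$ is constant. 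Together with the previous paragraph this proves one direction. The converse is a direct computation: inserting $(\mathcal{L}_{\xi}\theta)^i_a = \lambda^i{}_j\theta^j_a$ with constant $\lambda \in \mathfrak{o}(m,n)$ into the Leibniz expansion gives $\mathcal{L}_{\xi}g = 0$ by antisymmetry, while differentiating~\eqref{eqn:weizgamma} and using the induced variation $\delta\theta^a_i = -\theta^a_j\lambda^j{}_i$ of the inverse tetrad gives $\mathcal{L}_{\xi}\Gamma = 0$ (constancy lets $\partial_c$ pass through $\lambda$); this is the infinitesimal version of the already-noted invariance of the Weizenb\"ock connection under global Lorentz transforms. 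Then $\mathcal{L}_{\xi}T = 0$ follows from~\eqref{eqn:torsion}, and Proposition~\ref{pro:rcinv} closes the argument.

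I expect the main obstacle to be the commutation step: one must justify carefully that, under the hypothesis $\mathcal{L}_{\xi}\Gamma = 0$, the operators $\mathcal{L}_{\xi}$ and $\nabla_a$ commute when acting on $\theta^i_b$, treating the frame index $i$ as inert while $\nabla$ acts only on the spacetime index $b$. Everything else is either a direct application of Proposition~\ref{pro:rcinv} or a short index computation.
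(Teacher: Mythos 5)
Your proposal is correct, and while the ``if'' direction coincides with the paper's (compute \(\mathcal{L}_{\xi}g\) and \(\mathcal{L}_{\xi}\Gamma\) directly from \eqref{eqn:weizmetric} and \eqref{eqn:weizgamma}, then invoke Proposition~\ref{pro:rcinv}), your ``only if'' direction takes a genuinely different route. The paper integrates the vector field: it uses the invariance of \(g\) and \(\nabla\) under the flow \(\varphi_t\) together with the characterization of \(\theta\) as the covariantly constant, \(g\)-orthonormal tetrad unique up to a global Lorentz transform, concludes \((\varphi_t^*\theta)^i_a = L^i{}_j(t)\theta^j_a\), and differentiates at \(t=0\). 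You stay entirely infinitesimal: expanding \((\mathcal{L}_{\xi}\theta)^i_a = \kappa^i{}_j\theta^j_a\), you get antisymmetry of \(\kappa\) from \(\mathcal{L}_{\xi}g = 0\) and constancy from \(0 = \mathcal{L}_{\xi}(\nabla_a\theta^i_b) = \nabla_a(\mathcal{L}_{\xi}\theta)^i_b = (\partial_a\kappa^i{}_j)\theta^j_b\), where the commutation of \(\mathcal{L}_{\xi}\) with \(\nabla\) holds because the obstruction term is proportional to \(\mathcal{L}_{\xi}\Gamma\), which vanishes by hypothesis (via the proof of Proposition~\ref{pro:rcinv}). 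What your version buys is independence from the uniqueness lemma for the parallel orthonormal tetrad and from the passage to a finite flow, which the paper uses somewhat informally; what it costs is having to state and verify the commutator identity for \([\mathcal{L}_{\xi},\nabla]\) on one-forms --- the step you correctly flag as the delicate one, and which follows from the same computation that yields the second line of \eqref{eqn:connlieder}. Note that both arguments require \(M\) connected to promote the pointwise statement (\(d\kappa = 0\), respectively uniqueness of the parallel frame) to a single constant \(\lambda\).
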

\begin{proof}
A direct calculation shows that
\begin{equation}
(\mathcal{L}_{\xi}g)_{ab} = 2\eta_{ij}(\mathcal{L}_{\xi}\theta)^i_{(a}\theta^j_{b)}
\end{equation}
and
\begin{equation}
(\mathcal{L}_{\xi}\Gamma)^a{}_{bc} = \theta_i^a\partial_c(\mathcal{L}_{\xi}\theta)^i_b + \partial_c\theta^i_b(\mathcal{L}_{\xi}\theta)_i^a = \theta_i^a\partial_c(\mathcal{L}_{\xi}\theta)^i_b - \theta_i^a\theta_j^d\partial_c\theta^j_b(\mathcal{L}_{\xi}\theta)^i_d\,,
\end{equation}
which both vanish if \((\mathcal{L}_{\xi}\theta)^i_a = \lambda^i{}_j\theta^j_a\) with constant \(\lambda \in \mathfrak{o}(m,n)\). Using proposition~\ref{pro:rcinv} it follows further that also the corresponding Cartan geometry is invariant under \(\xi\).

The proof of the converse direction proceeds similarly. Once again making use of proposition~\ref{pro:rcinv} it follows from the invariance of a Cartan geometry \((\pi: P \to M, A)\) under a vector field \(\xi\) that also the metric \(g\) and connection \(\nabla\) of the corresponding Weizenb\"ock geometry are invariant under \(\xi\). Further, recall that we have constructed the tetrad \(\theta\) as the unique (up to a global Lorentz transform) tetrad which is covariantly constant with respect to \(\nabla\) and orthonormal with respect to \(g\). Let \(t \mapsto \varphi_t\) denote the one-parameter group of diffeomorphisms generated by \(\xi\) and \(\theta_i = \theta_i^a\partial_a\). We then have
\begin{equation}
0 = \nabla_{\theta_j}\theta_i = \varphi_{-t*} \circ \nabla_{\varphi_{t*} \circ \theta_j}(\varphi_{t*} \circ \theta_i)\,,
\end{equation}
since \(\nabla\) is invariant and \(\theta\) is covariantly constant. Using the facts that \(\varphi_{-t*}\) is a diffeomorphism and that the vector fields \(\varphi_{t*} \circ \theta_j\) span \(TM\) it follows that also \(\varphi_t^*\theta\) is covariantly constant. Further, \(\varphi_t^*\theta\) is orthonormal with respect to \(\varphi_t^*g = g\). However, since \(\theta\) is uniquely determined by these two conditions up to a global Lorentz transform, it follows that also \(\varphi_t^*\theta\) and \(\theta\) must be related by a global Lorentz transform, \((\varphi_t^*\theta)^i_a = L^i{}_j(t)\theta^j_a\). Differentiation with respect to \(t\) yields
\begin{equation}
(\mathcal{L}_{\xi}\theta)^i_a = \left.\frac{d}{dt}L^i{}_j(t)\right|_{t = 0}\theta^j_a = \lambda^i{}_j\theta^j_a
\end{equation}
with constant \(\lambda \in \mathfrak{o}(m,n)\).
\end{proof}

This result has an interesting implication: it shows that the notion of symmetry of Weizenb\"ock geometry, in the sense of proposition~\ref{pro:weizinv}, is (in general) not invariant under local Lorentz transforms of the form \(\theta^i_a(x) \mapsto \theta'^i_a(x) = L^i{}_j(x)\theta^j_a(x)\), although tetrads related by a local Lorentz transform are usually considered to be physically equivalent. In particular, if \(\theta\) is invariant under the diffeomorphisms generated by a vector field \(\xi\), so that \((\mathcal{L}_{\xi}\theta)^i_a = \lambda^i{}_j\theta^j_a\) with constant \(\lambda \in \mathfrak{o}(m,n)\), then \(\theta'\) is invariant if and only if
\begin{equation}
\left(\xi^a\partial_aL^i{}_k + L^i{}_l\lambda^l{}_k\right)L^{-1}{}^k{}_j = \lambda'^i{}_j\,,
\end{equation}
where \(\lambda' \in \mathfrak{o}(m,n)\) is also a constant infinitesimal Lorentz transform. This is due to the fact that only Lorentz transforms of this particular form leave the Weizenb\"ock connection invariant. This observation might be relevant in the context of teleparallel gravity and its possible extensions, where the construction of symmetric solutions is based on the choice of a suitable tetrad~\cite{Krssak:2015oua}.

This remark concludes our discussion of Cartan geometries based on the orthogonal model geometries, and of spacetime Cartan geometries in general. We have seen that we can express the symmetries of the most common geometric descriptions of spacetime in terms of Cartan geometry. In the remainder of this article we will see that Cartan geometry allows us to go further and describe the symmetries of observer space in the same way as those of spacetime.

\section{Observer space Cartan geometries}\label{sec:observer}
The geometries we discussed in the previous two sections have in common that the base manifold \(M\) was typically interpreted as spacetime. In this section we shift our perspective and consider Cartan geometries whose base manifold can rather be interpreted as the space of physical observers, and is hence called \emph{observer space}; see~\cite{Gielen:2012fz} for a detailed discussion. This physical interpretation, as well as the description of the geometry of observer space in terms of Cartan geometry, will be discussed in section~\ref{subsec:obscartan}. We will discuss symmetries of observer space in section~\ref{subsec:obssym}, and in particular answer the question which diffeomorphisms of observer space can be interpreted as diffeomorphisms of an underlying spacetime.

\subsection{Cartan geometry on observer space}\label{subsec:obscartan}
We start our discussion of observer space Cartan geometries with a heuristic construction of the model Klein geometry, following~\cite{Hohmann:2013fca}. Recall that the Cartan geometries displayed in the previous sections were defined on bundles \(\pi: P \to M\), where we identified the base manifold \(M\) with spacetime and the total space \(P\) with a suitably chosen admissible frame bundle. The first-order reductive model Klein geometry \(G/H\) with Lie algebra \(\mathfrak{g} = \mathfrak{h} \oplus \mathfrak{z}\) could then be interpreted such that \(\mathfrak{h}\) describes infinitesimal transformations of an admissible frame at a fixed point of spacetime, while \(\mathfrak{z}\) describes infinitesimal translations.

The basic idea behind our construction of an observer space model is to use the same interpretation for the total space \(P\) of admissible frames and the same Lie group \(G\) as for the spacetime model, but to choose a smaller subgroup \(K \subset H \subset G\), which keeps not only the observer's position fixed, but also one component of the observer's frame. This component can then be interpreted as the time component of the frame, and thus as the observer's velocity, while the remaining components are purely spatial. The action of \(K\) on \(P\) transforms only these spatial components and can thus be interpreted as a spatial rotation. The orbits of this group action form the fibers of a fiber bundle \(\hat{\pi}: P \to O\), whose base space \(O\) we call \emph{observer space}. Its geometry can be described by a Cartan geometry modeled on \(G/K\). The aim of this section is to introduce a class of model Klein geometries \(G/K\) which allow such an interpretation as observer space models, but to define them without referring to the underlying spacetime manifold \(M\). For this purpose we will employ the following definition.

\begin{defi}[Observer space model geometry]
Let \(G\) be a Lie group and \(K \subset G\) a closed subgroup. We call the Klein geometry \(G/K\) an \emph{observer space model} if there exists a split
\begin{equation}\label{eqn:obsalgsplit}
\mathfrak{g} = \mathfrak{k} \oplus \mathfrak{y} \oplus \vec{\mathfrak{z}} \oplus \mathfrak{z}^0
\end{equation}
into subrepresentations of the adjoint representation of \(K \subset G\) on \(\mathfrak{g}\) such that \(\mathfrak{z}^0\) is one-dimensional and \([\mathfrak{z}^0,\mathfrak{y}] = \vec{\mathfrak{z}}\), with the map \(\mathfrak{y} \to \vec{\mathfrak{z}}, y \mapsto [z,y]\) for fixed, non-zero \(z \in \mathfrak{z}^0\) being an isomorphism of representations of \(K\).
\end{defi}

It follows already from our definition that a Cartan geometry modeled on \(G/K\) is reductive. In the following we further restrict ourselves to first-order Cartan geometries, which implies that the adjoint representation of \(K \subset G\) on \(\mathfrak{y}\) and \(\vec{\mathfrak{z}}\) must be faithful. Note that we do not require the split~\eqref{eqn:obsalgsplit} to be irreducible, as this will not be necessary for our purposes; however, for the example we discuss in section~\ref{sec:ortobs} it will indeed be irreducible.

Recall from the definition in section~\ref{subsec:cartan} that for a Cartan geometry \((\hat{\pi}: P \to O, A)\) the restrictions \(A_p: T_pP \to \mathfrak{g}\) and \(\underline{A}_p: \mathfrak{g} \to T_pP\) at each point \(p \in P\) are isomorphisms of vector spaces. The split~\eqref{eqn:obsalgsplit} therefore induces a similar split of each tangent space, and hence a split of the tangent bundle \(TP\). Further, since the split of \(\mathfrak{g}\) is invariant under the adjoint representation of \(K\) and \(A\) is $K$-equivariant, the split of \(TP\) is invariant under the action of \(K\) on \(P\). Thus, it induces a similar split of \(TO\). This is summarized by the following diagram for every \(p \in P\).

\begin{equation}\label{eqn:tangbundsplit}
\xymatrix{\mathfrak{k} \ar@{^(->>}[d]_{\underline{A}_p} \ar@{}[r]|{\oplus} & \mathfrak{y} \ar@{^(->>}[d]_{\underline{A}_p} \ar@{}[r]|{\oplus} & \vec{\mathfrak{z}} \ar@{^(->>}[d]_{\underline{A}_p} \ar@{}[r]|{\oplus} & \mathfrak{z}^0 \ar@{^(->>}[d]_{\underline{A}_p} \ar@{}[r]|{=} & \mathfrak{g} \ar@{^(->>}[d]_{\underline{A}_p}\\
R_pP \ar[d]_{\hat{\pi}_*} \ar@{}[r]|{\oplus} & B_pP \ar@{^(->>}[d]_{\hat{\pi}_*} \ar@{}[r]|{\oplus} & \vec{H}_pP \ar@{^(->>}[d]_{\hat{\pi}_*} \ar@{}[r]|{\oplus} & H^0_pP \ar@{^(->>}[d]_{\hat{\pi}_*} \ar@{}[r]|{=} & T_pP \ar[d]_{\hat{\pi}_*}\\
0 & V_{\hat{\pi}(p)}O \ar@{}[r]|{\oplus} & \vec{H}_{\hat{\pi}(p)}O \ar@{}[r]|{\oplus} & H^0_{\hat{\pi}(p)}O \ar@{}[r]|{=} & T_{\hat{\pi}(p)}O}
\end{equation}

Arrows of the shape \(\xymatrix{\ar@{^(->>}[r]&}\) indicate that the corresponding linear maps are vector space isomorphisms. It thus follows in particular that the subspaces of \(T_pP\) have the same dimensions as the corresponding subspaces of \(\mathfrak{g}\) at each point \(p\), so that the subbundles of \(TP\) are of constant rank. Note also that the subbundle \(RP\) agrees with the vertical tangent bundle of the bundle \(\hat{\pi}: P \to O\), i.e., the kernel of \(\hat{\pi}_*: TP \to TO\). We denote the projections to the subbundles \(VO\), \(\vec{H}O\) and \(H^0O\) by
\begin{equation}
\Pi_V\,, \quad \Pi_{\vec{H}}\,, \quad \Pi_{H^0}\,, \quad \Pi_H = \Pi_{\vec{H}} + \Pi_{H^0}\,.
\end{equation}
Since \(\mathfrak{z}^0\) is one-dimensional by our definition above, we can pick a non-zero element \(\mathcal{Z}_0 \in \mathfrak{z}^0\) which is unique up to rescaling. Via the map \(\hat{\pi}_* \circ \underline{A}\) this element induces a section \(\mathbf{r}\) of \(H^0O\), which we call the \emph{Reeb vector field}. Further, it yields an isomorphism \([\mathcal{Z}_0, \bullet]: \mathfrak{y} \to \vec{\mathfrak{z}}\) of representations of \(K\). This isomorphism finally yields a vector bundle isomorphism \(\Theta: VO \to \vec{H}O\).

We conclude this section by providing a physical interpretation for the split~\eqref{eqn:obsalgsplit} and the objects derived from our definition of an observer space model. The Lie algebra \(\mathfrak{k} \subset \mathfrak{g}\) corresponds to infinitesimal transformations of the spatial frame components and thus takes the role of the algebra of infinitesimal rotations. The corresponding bundle \(RP\) consists of tangent vectors to the fibers of the bundle \(\hat{\pi}: P \to O\). The remaining subspaces of \(\mathfrak{g}\), and thus the corresponding subbundles of \(TP\) and \(TO\) are divided into boosts \(\mathfrak{y}\), which change the observer's velocity, and translations \(\mathfrak{z}\), which change his position and further split into a spatial part \(\vec{\mathfrak{z}}\) and a temporal part \(\mathfrak{z}^0\), relative to the observer's velocity. Note that if the vertical subbundle \(VO\) is an integrable distribution, it can be integrated to a foliation, whose leaf space can be identified with spacetime \(M\). The physical interpretation behind this case is that two observers \(o,o' \in O\) can be regarded as being at the same point \(x \in M\) of spacetime (although possibly having different velocities), if and only if they belong to the same leaf \(x\) of the foliation of \(O\). However, we will not need this condition for our constructions in this article, and thus also allow observer spaces for which \(VO\) is not integrable. In this case, there is no physical notion of (absolute) spacetime, as there is no equivalence relation of ``being at the same point'' between observers \(o,o' \in O\); however, one may define a notion of relative or local spacetime~\cite{Gielen:2012fz}.

A special role is assigned to the space \(\mathfrak{z}^0\), where \(\mathcal{Z}_0\) is interpreted as the generator of time translation. The vector fields \(\underline{A}(\mathcal{Z}_0) \in \Vect(P)\) and \(\mathbf{r} \in \Vect(O)\) can be interpreted to govern the time evolution of observers and their frames. The interpretation of \(\mathcal{Z}_0\) as the generator of time translation also enters the interpretations of \(\mathfrak{y}\) and \(\vec{\mathfrak{z}}\), as well as the corresponding subbundles of \(TP\) and \(TO\), as boosts and spatial translations. The condition that \([\mathcal{Z}_0, \bullet]: \mathfrak{y} \to \vec{\mathfrak{z}}\) is an isomorphism means that the time evolution of an infinitesimal boost (i.e., a change of the observer's velocity) generates an infinitesimal spatial translation (i.e., a change in the observer's position). This appears to be a natural condition, and will in fact be necessary for our discussion of symmetries, as we will see in the following section.

\subsection{Symmetries for observer spaces}\label{subsec:obssym}
We now consider symmetries of Cartan geometries \((\hat{\pi}: P \to O, A)\) based on the model Klein geometry \(G/K\) introduced in the preceding section. Since we are now dealing with a first-order reductive Cartan geometry over the observer space \(O\), also generators of symmetries will be vector fields \(\Xi \in \Vect(O)\). Any such vector field can of course be lifted to \(\mathrm{GL}(O)\) as shown in section~\ref{subsec:lifts}, and thus be used to define symmetries as shown in section~\ref{subsec:cartansym}, where we now view \(P\) as a subbundle of \(\mathrm{GL}(O)\). However, we wish to focus on vector fields \(\Xi\) which can be interpreted by an observer as generators of spacetime transformations. The following definition, which does not make any reference to spacetime itself, but uses only the elements of observer space Cartan geometry detailed in the previous section, is one of the central notions proposed in this article.

\begin{defi}[Spatio-temporal vector field]
We call a vector field \(\Xi\) on \(O\) \emph{spatio-temporal} if its boost component is the time derivative of the translation component,
\begin{equation}
\Pi_H\mathcal{L}_{\mathbf{r}}(\Pi_H\Xi) = \Theta\Pi_V\Xi\,,
\end{equation}
and the translation component does not change along boost directions,
\begin{equation}
\Pi_H\mathcal{L}_{\Upsilon}(\Pi_H\Xi) = 0
\end{equation}
for all \(\Upsilon \in \Gamma(VO)\).
\end{defi}

These two conditions can be interpreted as follows. The vector field \(\Xi\) splits into horizontal and vertical components \(\Pi_H\Xi\) and \(\Pi_V\Xi\), which change the observer's position and velocity, respectively. The Lie derivative \(\mathcal{L}_{\mathbf{r}}(\Pi_H\Xi) = [\mathbf{r},\Pi_H\Xi]\) consists of two components: a horizontal component measuring the change of \(\Pi_H\Xi\) along \(\mathbf{r}\) and a vertical component measuring the non-integrability of the horizontal distribution \(HO\). Here we are only interested in the first part, and thus apply the horizontal projection again. The result can be seen as the time derivative of the change of the observer's position. We compare it to the change of the observer's velocity, which is modeled by the vertical component \(\Pi_V\Xi\). For this purpose we use the bundle isomorphism \(\Theta: VO \to \vec{H}O\). Note that we have applied the projection \(\Pi_H\) instead of \(\Pi_{\vec{H}}\) on the left hand side, implying that the time translation component of \(\mathcal{L}_{\mathbf{r}}(\Pi_H\Xi)\) should vanish.

The second condition has a similar interpretation. The Lie derivative \(\mathcal{L}_{\Upsilon}(\Pi_H\Xi)\) splits into a horizontal part measuring the change of \(\Pi_H\Xi\) along \(\Upsilon\) and a vertical part measuring the converse. Here we are only interested in the horizontal part. The condition then expresses that observers undergo the same change of position by \(\Pi_H\Xi\) if they are related to each other by a pure boost. This condition becomes more clear if the vertical distribution \(VO\) can be integrated to a foliation, whose leaf space is a smooth manifold \(M\), with a surjective submersion \(\bar{\pi}: O \to M\). In this case the vector field \(\Upsilon\) generates a leaf-preserving one-parameter group \(t \mapsto \upsilon_t\) of diffeomorphisms of \(O\). The second condition then simply translates to
\begin{equation}
\bar{\pi}_* \circ \Pi_H\Xi = \bar{\pi}_* \circ \upsilon_{t*} \circ \Pi_H\Xi\,,
\end{equation}
taking into account that the kernel of \(\bar{\pi}_*\) is the vertical subbundle. This means that \(\bar{\pi}_* \circ \Pi_H\Xi: O \to TM\) is constant on the leaves of the foliation, so that it defines a vector field on \(M\), which is interpreted as a generator of transformations of the spacetime manifold. However, we emphasize again that we do not restrict our definition to the case of integrable vertical bundles, which is the reason for using the Lie derivative to restrict only the local properties of \(\Xi\).

This concludes our discussion of general observer space Cartan geometries and their symmetries. The general construction detailed in this section will be applied to a particular example in the following section, which also serves as a further illustration.

\section{Orthogonal observer space Cartan geometries}\label{sec:ortobs}
We now come to a particular class of observer space Cartan geometries in the sense of the definition detailed in the previous section. This class relates to the general observer space model in the same way as the orthogonal model discussed in section~\ref{sec:orthogonal} relates to a general Cartan geometry of spacetime, and will therefore also be called orthogonal. We start with a brief discussion of the Lie groups used in the orthogonal observer space model and their Lie algebra structure in section~\ref{subsec:obsort}. We then briefly review how a Cartan geometry based on this model can be derived from a Finsler spacetime geometry in section~\ref{subsec:finslercartan}. We finally discuss the symmetries of Finsler spacetimes in section~\ref{subsec:finslersym}.

\subsection{The orthogonal observer space model}\label{subsec:obsort}
The model Klein geometry we discuss in this section is closely related to the orthogonal model introduced in section~\ref{subsec:ortklein}, where the Klein geometry \(G/H\) could be interpreted as a maximally symmetric space equipped with a metric of signature \((m,n)\). Here we restrict ourselves to metrics with \(m = 1\) time dimension and \(n = d\) spatial dimensions. Further, we do not divide \(G\) by the full Lorentz group \(H\), but only by the group \(K\) of spatial rotations. This model Klein geometry can thus be summarized as follows.
\begin{equation}\label{eqn:ortobsgroups}
G = \begin{cases}
\mathrm{O}(1,d+1) & \text{for } \Lambda = 1\\
\mathrm{IO}(1,d) & \text{for } \Lambda = 0\\
\mathrm{O}(2,d) & \text{for } \Lambda = -1
\end{cases}\,, \quad H = \mathrm{O}(1,d)\,, \quad K = \mathrm{O}(d)\,.
\end{equation}
Here \(G\) can be any of the three groups listed above, where we introduced a parameter \(\Lambda \in \{-1, 0, 1\}\) indicating the scalar curvature of the corresponding Klein geometry \(G/H\), which can be interpreted as the cosmological constant on this maximally symmetric spacetime~\cite{Gielen:2012fz}. The Lie algebra \(\mathfrak{g}\) splits in the form~\eqref{eqn:obsalgsplit} into subrepresentations of the adjoint representation of \(K\) on \(\mathfrak{g}\). This split is related to the split \(\mathfrak{g} = \mathfrak{h} \oplus \mathfrak{z}\) by \(\mathfrak{h} = \mathfrak{k} \oplus \mathfrak{y}\) and \(\mathfrak{z} = \vec{\mathfrak{z}} \oplus \mathfrak{z}^0\). This allows us to write any element \(a \in \mathfrak{g}\) in the form
\begin{equation}
a = \frac{1}{2}k^{\mu}{}_{\nu}\mathcal{K}_{\mu}{}^{\nu} + y^{\mu}\mathcal{Y}_{\mu} + z^{\mu}\mathcal{Z}_{\mu} + z^0\mathcal{Z}_0\,,
\end{equation}
where the generators of \(\mathfrak{g}\) satisfy the Lie algebra relations
\begin{gather}
[\mathcal{K}_{\mu}{}^{\nu},\mathcal{K}_{\rho}{}^{\sigma}] = \delta^{\nu}_{\rho}\mathcal{K}_{\mu}{}^{\sigma} - \delta^{\sigma}_{\mu}\mathcal{K}_{\rho}{}^{\nu} + \delta_{\mu\rho}\delta^{\sigma\tau}\mathcal{K}_{\tau}{}^{\nu} - \delta^{\nu\sigma}\delta_{\rho\tau}\mathcal{K}_{\mu}{}^{\tau}\,, \quad [\mathcal{K}_{\mu}{}^{\nu},\mathcal{Z}_0] = 0\,,\nonumber\\
[\mathcal{K}_{\mu}{}^{\nu},\mathcal{Y}_{\rho}] = \delta^{\nu}_{\rho}\mathcal{Y}_{\mu} - \delta_{\mu\rho}\delta^{\nu\sigma}\mathcal{Y}_{\sigma}\,, \quad [\mathcal{K}_{\mu}{}^{\nu},\mathcal{Z}_{\rho}] = \delta^{\nu}_{\rho}\mathcal{Z}_{\mu} - \delta_{\mu\rho}\delta^{\nu\sigma}\mathcal{Z}_{\sigma}\,, \quad [\mathcal{Y}_{\mu},\mathcal{Y}_{\nu}] = -\delta_{\mu\rho}\mathcal{K}_{\nu}{}^{\rho}\,,\label{eqn:obsortalgebra}\\
[\mathcal{Y}_{\mu},\mathcal{Z}_{\nu}] = \delta_{\mu\nu}\mathcal{Z}_0\,, \quad [\mathcal{Y}_{\mu},\mathcal{Z}_0] = \mathcal{Z}_{\mu}\,, \quad [\mathcal{Z}_{\mu},\mathcal{Z}_{\nu}] = \Lambda\delta_{\mu\rho}\mathcal{K}_{\nu}{}^{\rho}\,, \quad [\mathcal{Z}_{\mu},\mathcal{Z}_0] = \Lambda\mathcal{Y}_{\mu}\,.\nonumber
\end{gather}
These depend on the parameter \(\Lambda\), and thus on the particular choice of a model geometry.

In order to discuss Cartan geometries based on these models, we first make a few remarks on the structure of the admissible frame bundle, and apply the same construction as we did also for the orthogonal spacetime geometries in section~\ref{subsec:ortklein}. Note that the space \(\mathfrak{y} \oplus \vec{\mathfrak{z}} \oplus \mathfrak{z}^0\) is equipped with a metric \(\eta\) of signature \((1,2d)\), such that \(\eta(\mathcal{Z}_i,\mathcal{Z}_j) = \eta_{ij}\), \(\eta(\mathcal{Y}_{\mu},\mathcal{Y}_{\nu}) = \delta_{\mu\nu}\) and \(\eta(\mathcal{Z}_i,\mathcal{Y}_{\mu}) = 0\), which is invariant under the adjoint representation of \(K \subset G\). An admissible frame \(p \in P_o \subset P\) over \(o \in O\), which is a linear map \(p: \mathfrak{y} \oplus \vec{\mathfrak{z}} \oplus \mathfrak{z}^0 \to T_oO\), thus defines a metric \(g(.,.) = \eta(p^{-1}(.), p^{-1}(.))\) of the same signature \((1,2d)\) on \(O\). This metric is independent of the choice of the frame \(p \in P_o\).

Recall further from section~\ref{subsec:obscartan} that an observer space Cartan geometry is equipped with a unique split \(TO = VO + \vec{H}O + H^0O\) of the tangent bundle. The admissible frame bundle thus turns out to be constituted by frames which respect the split of the tangent bundle, and which are orthonormal with respect to the metric \(g\).

To further discuss objects on the frame bundle, we introduce coordinates first on \(\mathrm{GL}(O)\) and then restrict them to \(P\). Given coordinates \((w^A)\) on \(O\), we can introduce coordinates \((w^A,v_i^A,\underline{v}_{\mu}^A)\) on \(\mathrm{GL}(O)\) such that a linear map \(v: \mathfrak{y} \oplus \vec{\mathfrak{z}} \oplus \mathfrak{z}^0 \to T_oO\) takes the form
\begin{equation}
v(\mathcal{Y}_{\mu}) = \underline{v}_{\mu}^A\partial_A\,, \quad v(\mathcal{Z}_i) = v_i^A\partial_A\,.
\end{equation}
Using these coordinates, restricted to \(P\), we can now discuss the general properties of the Cartan connection. First of all, we use the generators~\eqref{eqn:obsortalgebra} to write the Cartan connection as
\begin{equation}\label{eqn:ortobscartconn}
A = \Omega + b + \vec{e} + e^0 = \frac{1}{2}\Omega^{\mu}{}_{\nu}\mathcal{K}_{\mu}{}^{\nu} + b^{\mu}\mathcal{Y}_{\mu} + e^{\mu}\mathcal{Z}_{\mu} + e^0\mathcal{Z}_0\,,
\end{equation}
while for the associated vector fields we write
\begin{equation}\label{eqn:ortobsfundvect}
\underline{A}\left(\frac{1}{2}k^{\mu}{}_{\nu}\mathcal{K}_{\mu}{}^{\nu} + y^{\mu}\mathcal{Y}_{\mu} + z^{\mu}\mathcal{Z}_{\mu} + z^0\mathcal{Z}_0\right) = \frac{1}{2}k^{\mu}{}_{\nu}\underline{\Omega}_{\mu}{}^{\nu} + y^{\mu}\underline{b}_{\mu} + z^{\mu}\underline{e}_{\mu} + z^0\underline{e}_0\,.
\end{equation}
The one-forms \(\mu_o \in \Omega^1(P_o, \mathfrak{k})\) induced by the Maurer-Cartan form \(\mu \in \Omega^1(K, \mathfrak{k})\) on the fibers of \(P\) and the canonical vector fields \(\widetilde{\mathcal{K}}_{\mu}{}^{\nu}\) of the generators \(\mathcal{K}_{\mu}{}^{\nu}\) are given by
\begin{equation}\label{eqn:ortobsmccan}
\mu_o = \frac{1}{2}v^{-1}{}^{\mu}_Adv_{\nu}^A\mathcal{K}_{\mu}{}^{\nu} = \frac{1}{2}\underline{v}^{-1}{}^{\mu}_Ad\underline{v}_{\nu}^A\mathcal{K}_{\mu}{}^{\nu} \quad \text{and} \quad \widetilde{\mathcal{K}}_{\mu}{}^{\nu} = v_{\mu}^A\bar{\partial}^{\nu}_A + \underline{v}_{\mu}^A\underline{\bar{\partial}}^{\nu}_A - \delta_{\mu\rho}\delta^{\nu\sigma}(v_{\sigma}^A\bar{\partial}^{\rho}_A + \underline{v}_{\sigma}^A\underline{\bar{\partial}}^{\rho}_A)\,.
\end{equation}
Note that we have provided two different coordinate expressions for \(\mu_o\). Although they define different one-forms on \(\mathrm{GL}(O)\), either of them restricts to the same one-form on \(P\). These coordinate expressions are also consistent with the canonical vector fields, which are related to the one-forms by \(\mu_o(\widetilde{\mathcal{K}}_{\mu}{}^{\nu}) = \mathcal{K}_{\mu}{}^{\nu}\). The particular form of the canonical vector fields given above comes from the fact that rotations act both on the $\mathfrak{y}$-components \(\underline{v}_{\mu}^A\) and the $\vec{z}$-components \(v_{\mu}^A\) of a frame \(v\). Making use of these expressions, we can write the conditions~\ref{cartan:mcform} and~\ref{cartan:canonical} in the form
\begin{equation}\label{eqn:ortobsmccan2}
e^i(\underline{\Omega}_{\mu}{}^{\nu}) = 0\,, \quad b^{\rho}(\underline{\Omega}_{\mu}{}^{\nu}) = 0\,, \quad \Omega^{\rho}{}_{\sigma}(\underline{\Omega}_{\mu}{}^{\nu}) = \delta^{\rho}_{\mu}\delta^{\nu}_{\sigma} - \eta_{\nu\rho}\eta^{\mu\sigma}
\end{equation}
and
\begin{equation}\label{eqn:ortobsmccan3}
\underline{\Omega}_{\mu}{}^{\nu} = v_{\mu}^A\bar{\partial}^{\nu}_A + \underline{v}_{\mu}^A\underline{\bar{\partial}}^{\nu}_A - \delta_{\mu\rho}\delta^{\nu\sigma}(v_{\sigma}^A\bar{\partial}^{\rho}_A + \underline{v}_{\sigma}^A\underline{\bar{\partial}}^{\rho}_A)\,.
\end{equation}
Finally, the conditions~\ref{cartan:equivariant} and~\ref{cartan:equivariant2} in differential form are given by
\begin{gather}
\mathcal{L}_{\widetilde{\mathcal{K}}_{\mu}{}^{\nu}}\Omega = -\frac{1}{2}\left(\Omega^{\nu}{}_{\rho}\mathcal{K}_{\mu}{}^{\rho} - \Omega^{\rho}{}_{\mu}\mathcal{K}_{\rho}{}^{\nu} + \Omega^{\rho}{}_{\sigma}\eta_{\mu\rho}\eta^{\sigma\tau}\mathcal{K}_{\tau}{}^{\nu} - \Omega^{\rho}{}_{\sigma}\eta^{\nu\sigma}\eta_{\rho\tau}\mathcal{K}_{\mu}{}^{\tau}\right)\,, \quad \mathcal{L}_{\widetilde{\mathcal{K}}_{\mu}{}^{\nu}}e^0 = 0\,,\nonumber\\
\mathcal{L}_{\widetilde{\mathcal{K}}_{\mu}{}^{\nu}}b = -b^{\nu}\mathcal{Y}_{\mu} + b^{\rho}\eta_{\mu\rho}\eta^{\nu\sigma}\mathcal{Y}_{\sigma}\,, \quad \mathcal{L}_{\widetilde{\mathcal{K}}_{\mu}{}^{\nu}}\vec{e} = -e^{\nu}\mathcal{Z}_{\mu} + e^{\rho}\eta_{\mu\rho}\eta^{\nu\sigma}\mathcal{Z}_{\sigma}\,,\label{eqn:ortobseqv}
\end{gather}
and
\begin{gather}
\mathcal{L}_{\widetilde{\mathcal{K}}_{\mu}{}^{\nu}}\underline{\Omega}_{\rho}{}^{\sigma} = \delta^{\nu}_{\rho}\underline{\Omega}_{\mu}{}^{\sigma} - \delta^{\sigma}_{\mu}\underline{\Omega}_{\rho}{}^{\nu} + \delta_{\mu\rho}\delta^{\sigma\tau}\underline{\Omega}_{\tau}{}^{\nu} - \delta^{\nu\sigma}\delta_{\rho\tau}\underline{\Omega}_{\mu}{}^{\tau}\,, \quad \mathcal{L}_{\widetilde{\mathcal{K}}_{\mu}{}^{\nu}}\underline{e}_0 = 0\,,\nonumber\\
\mathcal{L}_{\widetilde{\mathcal{K}}_{\mu}{}^{\nu}}\underline{b}_{\rho} = \delta^{\nu}_{\rho}\underline{b}_{\mu} - \delta_{\mu\rho}\delta^{\nu\sigma}\underline{b}_{\sigma}\,, \quad \mathcal{L}_{\widetilde{\mathcal{K}}_{\mu}{}^{\nu}}\underline{e}_{\rho} = \delta^{\nu}_{\rho}\underline{e}_{\mu} - \delta_{\mu\rho}\delta^{\nu\sigma}\underline{e}_{\sigma}\,.\label{eqn:ortobseqv2}
\end{gather}
Here we used the fact that the vertical tangent bundle of \(P\) is spanned by the canonical vector fields \(\widetilde{\mathcal{K}}_{\mu}{}^{\nu}\). With these properties at hand, we can now use the model geometry discussed above in order to describe the geometry of Finsler spacetimes in the next section.

\subsection{Finsler spacetimes in Cartan language}\label{subsec:finslercartan}
We now express the Finsler spacetime geometry detailed in section~\ref{subsec:finsler} as an observer space Cartan geometry modeled on \(G/K\) with \(G\) and \(K\) given in the definition~\eqref{eqn:ortobsgroups}. For this purpose we follow the construction detailed in~\cite{Hohmann:2013fca}, which we briefly summarize here. We first define the observer space \(O\) as the union of the future unit timelike shells defined in condition~\ref{finsler:timelike} in section~\ref{subsec:finsler}, whose elements correspond to the physical observer velocities,
\begin{equation}\label{eqn:finslerobsspace}
O = \bigcup_{x \in M}S_x \subset TM\,.
\end{equation}
We then define a principal $K$-bundle \(\pi: P \to O\). It is most convenient to first define it as a subbundle of the frame bundle \(\mathrm{GL}(M)\), which we then identify with a subbundle of \(\mathrm{GL}(O)\) following the construction shown in section~\ref{subsec:forcartan}. Using the coordinates \((x^a, f_i^a)\) on the frame bundle introduced in section~\ref{subsec:lifts} we define
\begin{equation}\label{eqn:finslerobsbundle}
P = \left\{(x,f) \in \mathrm{GL}(M)\,|\,f_0 \in O \; \text{and} \; g^F_{ab}(x,f_0)f_i^af_j^b = -\eta_{ij}\right\}
\end{equation}
with the canonical projection \(\pi(x,f) = (x,f_0)\). Here \(g^F_{ab}\) denotes the Finsler metric. Using the split~\eqref{eqn:ortobscartconn} we can write the Cartan connection \(A \in \Omega^1(P, \mathfrak{g})\) in the form
\begin{equation}
\Omega^{\mu}{}_{\nu} = f^{-1}{}^{\mu}_a\left[df_{\nu}^a + f_{\nu}^b\left(F^a{}_{bc}dx^c + C^a{}_{bc}\delta y^c\right)\right]\,, \quad b^{\mu} = f^{-1}{}^{\mu}_a\delta y^a\,, \quad e^i = f^{-1}{}^i_adx^a\,,
\end{equation}
where we wrote \(y^a\) synonymously for the time component \(f_0\) of the frame \(f\), which corresponds to the observer's velocity, and \(\delta y^a\) refers to the Berwald basis~\eqref{eqn:cotberwald}. \(F^a{}_{bc}\) and \(C^a{}_{bc}\) denote the coefficients of the Cartan linear connection~\eqref{eqn:cartlinconn}. The corresponding associated vector fields take the form
\begin{equation}
\underline{\Omega}_{\mu}{}^{\nu} = f_{\mu}^a\bar{\partial}^{\nu}_a - \delta_{\mu\rho}\delta^{\nu\sigma}f_{\sigma}^a\bar{\partial}^{\rho}_a\,, \quad \underline{b}_{\mu} = f_{\mu}^a\left(\bar{\partial}^0_a - f_j^bC^c{}_{ab}\bar{\partial}^j_c\right)\,, \quad \underline{e}_i = f_i^a\left(\partial_a - f_j^bF^c{}_{ab}\bar{\partial}^j_c\right)\,.
\end{equation}
In order to identify \(P\) with a subbundle of \(\mathrm{GL}(O)\) we first introduce a suitable set of coordinates. Let \((\hat{x}^a,u^{\alpha})\) be coordinates on \(O\) such that \(x^a = \hat{x}^a\) and \(y^a = y^a(\hat{x}^a, u^{\alpha})\). We can identify a frame at \(o \in O\) with a linear map \(v: \mathfrak{y} \oplus \vec{\mathfrak{z}} \oplus \mathfrak{z}^0 \to T_oO\) and introduce coordinates \((\hat{x}^a,u^{\alpha},\hat{v}_i^a,\tilde{v}_i^{\alpha},\underline{\hat{v}}_{\mu}^a,\underline{\tilde{v}}_{\mu}^{\alpha})\) on \(\mathrm{GL}(O)\) such that
\begin{equation}
v(\mathcal{Y}_{\mu}) = \underline{\hat{v}}_{\mu}^a\hat{\partial}_a + \underline{\tilde{v}}_{\mu}^{\alpha}\tilde{\partial}_{\alpha}\,, \quad v(\mathcal{Z}_i) = \hat{v}_i^a\hat{\partial}_a + \tilde{v}_i^{\alpha}\tilde{\partial}_{\alpha}\,,
\end{equation}
where \(\hat{\partial}_a\) and \(\tilde{\partial}_{\alpha}\) are the coordinate vector fields corresponding to the coordinates \((\hat{x}^a,u^{\alpha})\) on \(O\). Similarly, we write the inverse frames as
\begin{equation}
v^{-1}(\hat{\partial}_a) = \underline{\hat{v}}^{-1}{}^{\mu}_a\mathcal{Y}_{\mu} + \hat{v}^{-1}{}^i_a\mathcal{Z}_i\,, \quad v^{-1}(\tilde{\partial}_{\alpha}) = \underline{\tilde{v}}^{-1}{}^{\mu}_{\alpha}\mathcal{Y}_{\mu} + \tilde{v}^{-1}{}^i_{\alpha}\mathcal{Z}_i\,.
\end{equation}
In these coordinates we can identify a frame \((x,f) \in P\) over \((x,f_0) = (\hat{x},u) \in O\) using the relations
\begin{equation}\label{eqn:obsfbembed}
\hat{v}_i^a = f_i^a\,, \quad \tilde{v}_i^{\alpha} = f_i^a\delta_au^{\alpha}\,, \quad \underline{\hat{v}}_{\mu}^a = 0\,, \quad \underline{\tilde{v}}_{\mu}^{\alpha} = f_{\mu}^a\bar{\partial}_au^{\alpha}\,.
\end{equation}
For later use we also provide the components of the inverse frames \(v^{-1}\) in the same basis, which are given by
\begin{equation}\label{eqn:invobsfbembed}
\hat{v}^{-1}{}^i_a = f^{-1}{}^i_a\,, \quad \tilde{v}^{-1}{}^i_{\alpha} = 0\,, \quad \underline{\hat{v}}^{-1}{}^{\mu}_a = f^{-1}{}^{\mu}_b\tilde{\partial}_{\alpha}y^b\delta_au^{\alpha}\,, \quad \underline{\tilde{v}}^{-1}{}^{\mu}_{\alpha} = f^{-1}{}^{\mu}_a\tilde{\partial}_{\alpha}y^a\,.
\end{equation}
As a subbundle of \(\mathrm{GL}(O)\) we can then express \(P\) as
\begin{multline}\label{eqn:finslerobsbundle2}
P = \big\{(\hat{x},u,v) \in \mathrm{GL}(O)\,|\,g^F_{ab}(\hat{x},u)\hat{v}_i^a\hat{v}_j^b = -\eta_{ij}, \, \hat{v}_0^a = y^a(\hat{x},u), \, \hat{v}_{\mu}^a = \underline{\hat{v}}_{\mu}^b\hat{\partial}_by^a(\hat{x},u) + \underline{\tilde{v}}_{\mu}^{\alpha}\tilde{\partial}_{\alpha}y^a(\hat{x},u),\\
\underline{\hat{v}}_{\mu}^a = 0 \; \text{and} \; \hat{v}_i^b\hat{\partial}_by^a(\hat{x},u) + \tilde{v}_i^{\alpha}\tilde{\partial}_{\alpha}y^a(\hat{x},u) + N^a{}_b(\hat{x},u)\hat{v}_i^b = 0\big\}
\end{multline}
We could also express the Cartan connection \(A\) in terms of the new coordinates. However, it will be more convenient to use the coordinates \((x,f)\) on \(P\) and to use the newly introduced coordinates only for objects which are naturally defined on \(\mathrm{GL}(O)\).

\subsection{Symmetries for Finsler spacetimes}\label{subsec:finslersym}
We finally show how the relation between a Finsler spacetime and a Cartan geometry on its observer space can be used to translate the notions of symmetry between these two different geometries. Recall that in section~\ref{subsec:finsler} we called a vector field \(\xi\) on \(M\) a symmetry of the Finsler spacetime if its canonical lift \(\hat{\xi}\) to the tangent bundle \(TM\) leaves the fundamental geometry function \(L\) invariant, \(\mathcal{L}_{\hat{\xi}}L = 0\), essentially following the thorough discussion in~\cite{Tashiro:1959}. The following proposition connects this property to symmetries in the sense of Cartan geometry.

\begin{prop}
For a Finsler spacetime \((M,L,F)\) with induced observer space Cartan geometry \((\pi: P \to O, A)\) there is a one-to-one correspondence between vector fields \(\xi \in \Vect(M)\) inducing symmetries of the Finsler function and vector fields \(\Xi \in \Vect(O)\) leaving the Cartan geometry invariant, and these vector fields \(\Xi\) are spatio-temporal.
\end{prop}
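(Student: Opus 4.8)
The plan is to take the forward map to be $\xi \mapsto \Xi = \hat\xi|_O$, where $\hat\xi = \xi^a\partial_a + y^b\partial_b\xi^a\bar\partial_a$ is the complete lift to $TM$; this is well defined because $\mathcal{L}_{\hat\xi}L = 0$ forces $\hat\xi(L) = 0$, so the flow preserves the level set $|L| = 1$ and hence $O \subset TM$. First I would reduce the notion of invariance to a single condition. Since $P$ is realized as a subbundle of $\mathrm{GL}(O)$ by \eqref{eqn:obsfbembed}, and the complement $\mathfrak{y} \oplus \vec{\mathfrak{z}} \oplus \mathfrak{z}^0$ of $\mathfrak{k}$ has dimension $\dim O$, the general first-order reductive framework of section~\ref{subsec:cartansym} identifies $b + \vec{e} + e^0$ with the solder form of $O$; it is therefore automatically invariant under any complete lift $\bar\Xi$. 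Thus invariance of the Cartan geometry is equivalent to $\bar\Xi$ being tangent to $P$ together with $\mathcal{L}_{\bar\Xi}\Omega = 0$.

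For the forward direction I would invoke the computation of section~\ref{subsec:finsler}, by which $\mathcal{L}_{\hat\xi}L = 0$ implies $\mathcal{L}_{\hat\xi}g^F = \mathcal{L}_{\hat\xi}N = \mathcal{L}_{\hat\xi}F = \mathcal{L}_{\hat\xi}C = 0$. Tangency of $\bar\Xi$ to $P$ then follows exactly as in \eqref{eqn:orttanglift}: the orthonormality constraint $g^F_{ab}(x,f_0)f_i^af_j^b = -\eta_{ij}$ in \eqref{eqn:finslerobsbundle} transforms into $f_i^af_j^b(\mathcal{L}_{\hat\xi}g^F)_{ab}$, while the constraint $f_0 \in O$ transforms by $\mathcal{L}_{\hat\xi}L$, both of which vanish. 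The condition $\mathcal{L}_{\bar\Xi}\Omega = 0$ is then a direct calculation entirely parallel to the proofs of propositions~\ref{pro:affinv} and~\ref{pro:rcinv}: writing $\Omega$ through $F$ and $C$ as in section~\ref{subsec:finslercartan}, the inhomogeneous $\partial_b\partial_c\xi^a$ term produced by the $df$ part cancels against the inhomogeneous term in the transformation of $F$, and what remains is proportional to $(\mathcal{L}_{\hat\xi}F)^a{}_{bc}$ and $(\mathcal{L}_{\hat\xi}C)^a{}_{bc}$. Here I would use that the embedding \eqref{eqn:obsfbembed} intertwines the frame-bundle lift $\bar\xi$ on $\mathrm{GL}(M)$ with $\bar\Xi$ on $\mathrm{GL}(O)$, which holds precisely because the embedding is built from the $\hat\xi$-invariant data $N$ and $g^F$.

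Next I would show the correspondence is a bijection and that its vector fields are spatio-temporal by establishing the characterization that $\Xi$ is spatio-temporal if and only if $\Xi = \hat\xi|_O$ for a unique $\xi \in \Vect(M)$. For a genuine Finsler spacetime the vertical bundle $VO$ is integrable with leaf space $M$ and projection $\bar\pi: O \to M$, so the second spatio-temporal condition $\Pi_H\mathcal{L}_\Upsilon(\Pi_H\Xi) = 0$ states exactly that $\Pi_H\Xi$ descends to a vector field $\xi = \bar\pi_*\Xi$ on $M$, while the first condition $\Pi_H\mathcal{L}_{\mathbf{r}}(\Pi_H\Xi) = \Theta\Pi_V\Xi$ states that the boost part of $\Xi$ is the Reeb-time derivative of its translation part. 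Since for the complete lift the velocity part $y^b\partial_b\xi^a$ is precisely this time derivative of $\xi^a$, both conditions hold for $\hat\xi|_O$ and, conversely, pin $\Xi$ down to $\hat\xi|_O$. Injectivity of the forward map is then immediate, since $\xi$ is recovered as $\bar\pi_*\Xi$.

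The main obstacle is the reverse (surjectivity) direction: given $\Xi$ leaving the Cartan geometry invariant, one must extract a spacetime field. I would argue that tangency of $\bar\Xi$ to $P$, through the constraints in \eqref{eqn:finslerobsbundle2} that tie the $O$-frame to the $M$-structure (in particular $\underline{\hat{v}}_\mu^a = 0$ and $\hat{v}_0^a = y^a(\hat{x},u)$), forces the base components of $\Xi$ to be fiber-independent, i.e.\ projectability, while $\mathcal{L}_{\bar\Xi}\Omega = 0$ forces the vertical part into the complete-lift form; together these give $\Xi = \hat\xi|_O$ and hence, by the characterization above, that $\Xi$ is spatio-temporal. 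Finally, since $\Xi$ is by construction a vector field on $O$, the lift $\hat\xi$ is tangent to $O \subset \{|L| = 1\}$, so $\hat\xi(L) = 0$ on $O$; as $\hat\xi(L)$ is positively homogeneous of degree $r$ in $y$, this propagates to $\mathcal{L}_{\hat\xi}L = 0$ on the whole cone over $O$, which is the domain relevant to the Finsler symmetry notion. The delicate points I expect to absorb most of the work are disentangling which components of $\Xi$ are constrained by $P$-membership versus by $\mathcal{L}_{\bar\Xi}\Omega = 0$, and the bookkeeping relating the induced coordinates on $\mathrm{GL}(M)$ and on $\mathrm{GL}(O)$ through \eqref{eqn:obsfbembed}.
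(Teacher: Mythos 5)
Your proposal is correct and follows essentially the same route as the paper: realizing \(P \subset \mathrm{GL}(O)\) via \eqref{eqn:obsfbembed}, observing that \(b + \vec{e} + e^0\) is the solder form and hence automatically invariant, extracting the tangency conditions for \(\bar{\Xi}\) from the explicit description \eqref{eqn:finslerobsbundle2}, reducing everything to the Lie derivatives of \(L\), \(g^F\), \(N\), \(F^a{}_{bc}\), \(C^a{}_{bc}\), and verifying spatio-temporality through the Berwald-basis expressions of \(\Pi_H\), \(\Pi_V\), \(\Theta\) and \(\mathbf{r}\). The only accounting difference is at the point you flag as delicate: in the paper the tangency conditions alone (those coming from \(\underline{\hat{v}}_{\mu}^a = 0\) and \(\hat{v}_0^a = y^a\)) already force \(\Xi = \hat{\xi}|_O\) with \(\mathcal{L}_{\hat{\xi}}L = 0\), so \(\mathcal{L}_{\bar{\Xi}}\Omega = 0\) then comes for free rather than being needed to pin down the vertical component.
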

\begin{proof}
The proof will proceed as follows. We will start from a vector field \(\Xi \in \Vect(O)\) and discuss under which circumstances its frame bundle lift \(\bar{\Xi} \in \Vect(\mathrm{GL}(O))\) is tangent to \(P\), which is a necessary condition for being a symmetry of the Cartan geometry. We will show that in this particular case it is also a sufficient condition, which means that a frame bundle lift \(\bar{\Xi}\) tangent to \(P\) also leaves the Cartan connection \(A \in \Omega^1(P, \mathfrak{g})\) invariant. We will then show that there exists a unique vector field \(\xi \in \Vect(M)\) inducing a symmetry of the Finsler function, whose tangent bundle lift \(\hat{\xi} \in \Vect(TM)\) is tangent to \(O\) and restricts to \(\Xi\) on \(O\). Conversely, we will show that the tangent bundle lift \(\hat{\xi}\) of any symmetry generator \(\xi\) of a Finsler spacetime restricts to a symmetry \(\Xi\) of the Cartan geometry on \(O\). We will finally show that \(\Xi\) is spatio-temporal. Note that it will turn out to be convenient to prove these statements in a different order, but the content of the proof will be the same as sketched here.

Let \(\Xi = \hat{X}^a\hat{\partial}_a + \tilde{X}^{\alpha}\tilde{\partial}_{\alpha}\) be a vector field on \(O\). Its lift \(\bar{\Xi}\) to the frame bundle \(\mathrm{GL}(O)\) takes the form
\begin{equation}
\begin{split}
\bar{\Xi} &= \hat{X}^a\hat{\partial}_a + \hat{v}_i^b\hat{\partial}_b\hat{X}^aD^i_a + \tilde{v}_i^{\beta}\tilde{\partial}_{\beta}\hat{X}^aD^i_a + \underline{\hat{v}}_{\mu}^b\hat{\partial}_b\hat{X}^a\overline{D}^{\mu}_a + \underline{\tilde{v}}_{\mu}^{\beta}\tilde{\partial}_{\beta}\hat{X}^a\overline{D}^{\mu}_a\\
&\phantom{=}+ \tilde{X}^{\alpha}\tilde{\partial}_{\alpha} + \hat{v}_i^b\hat{\partial}_b\tilde{X}^{\alpha}\underline{D}^i_{\alpha} + \tilde{v}_i^{\beta}\tilde{\partial}_{\beta}\tilde{X}^{\alpha}\underline{D}^i_{\alpha} + \underline{\hat{v}}_{\mu}^b\hat{\partial}_b\tilde{X}^{\alpha}\underline{\overline{D}}^{\mu}_{\alpha} + \underline{\tilde{v}}_{\mu}^{\beta}\tilde{\partial}_{\beta}\tilde{X}^{\alpha}\underline{\overline{D}}^{\mu}_{\alpha}\,,
\end{split}
\end{equation}
where \(\hat{\partial}_a, \tilde{\partial}_{\alpha}, D^i_a, \underline{D}^i_{\alpha}, \overline{D}^{\mu}_a, \underline{\overline{D}}^{\mu}_{\alpha}\) denote the coordinate vector fields corresponding to the coordinates \((\hat{x}^a,u^{\alpha},\hat{v}_i^a,\tilde{v}_i^{\alpha},\underline{\hat{v}}_{\mu}^a,\underline{\tilde{v}}_{\mu}^{\alpha})\). For \((x,u,v) \in P\), where these coordinates are given by the identification~\eqref{eqn:obsfbembed}, we thus find
\begin{equation}
\begin{split}
\bar{\Xi} &= \hat{X}^a\hat{\partial}_a + f_i^b\hat{\partial}_b\hat{X}^aD^i_a + f_i^b\delta_bu^{\beta}\tilde{\partial}_{\beta}\hat{X}^aD^i_a + f_{\mu}^b\bar{\partial}_bu^{\beta}\tilde{\partial}_{\beta}\hat{X}^a\overline{D}^{\mu}_a\\
&\phantom{=}+ \tilde{X}^{\alpha}\tilde{\partial}_{\alpha} + f_i^b\hat{\partial}_b\tilde{X}^{\alpha}\underline{D}^i_{\alpha} + f_i^b\delta_bu^{\beta}\tilde{\partial}_{\beta}\tilde{X}^{\alpha}\underline{D}^i_{\alpha} + f_{\mu}^b\bar{\partial}_bu^{\beta}\tilde{\partial}_{\beta}\tilde{X}^{\alpha}\underline{\overline{D}}^{\mu}_{\alpha}\\
&= \hat{X}^a\hat{\partial}_a + \tilde{X}^{\alpha}\tilde{\partial}_{\alpha} + f_i^b\delta_b\hat{X}^aD^i_a + f_{\mu}^b\bar{\partial}_b\hat{X}^a\overline{D}^{\mu}_a + f_i^b\delta_b\tilde{X}^{\alpha}\underline{D}^i_{\alpha} + f_{\mu}^b\bar{\partial}_b\tilde{X}^{\alpha}\underline{\overline{D}}^{\mu}_{\alpha}\,.
\end{split}
\end{equation}
In the following we rewrite the vector field \(\Xi\) in the form \(\Xi = X^a\partial_a + \bar{X}^a\bar{\partial}_a\), which will be more convenient for the remainder of the calculation. Note that the components in the different bases are related by
\begin{equation}\label{eqn:xbaserels}
\hat{X}^a = X^a\,, \quad \tilde{X}^{\alpha} = X^a\partial_au^{\alpha} + \bar{X}^a\bar{\partial}_au^{\alpha} \quad \Leftrightarrow \quad X^a = \hat{X}^a\,, \quad \bar{X}^a = \hat{X}^b\hat{\partial}_by^a + \tilde{X}^{\alpha}\tilde{\partial}_{\alpha}y^a\,.
\end{equation}
We now pose the question under which circumstances the frame bundle lift \(\bar{\Xi}\) is tangent to \(P \subset \mathrm{GL}(O)\). This is the case if and only if all of the following conditions are met, which are derived from the description~\eqref{eqn:finslerobsbundle2} of \(P\):
\begin{itemize}
\item
From the condition \(\underline{\hat{v}}_{\mu}^a = 0\) follows
\begin{equation}
0 = \bar{\Xi}(\underline{\hat{v}}_{\mu}^a) = f_{\mu}^b\bar{\partial}_b\hat{X}^a = f_{\mu}^b\bar{\partial}_bX^a\,.
\end{equation}
Since the vectors \(f_{\mu}^b\bar{\partial}_b\) for fixed \(f_{\mu}^b\) span a tangent space of the unit timelike shell \(S_x\), it follows that the components \(X^a\) must be constant on each such shell and may therefore depend only on the coordinates \((x^a)\) on \(M\). We will make use of this result to simplify the remaining conditions.

\item
From the condition \(\hat{v}_0^a = y^a(\hat{x},u)\) follows
\begin{equation}
0 = \bar{\Xi}[\hat{v}_0^a - y^a(\hat{x},u)] = f_0^b\delta_b\hat{X}^a - \hat{X}^b\hat{\partial}_by^a - \tilde{X}^{\alpha}\tilde{\partial}_{\alpha}y^a = y^b\partial_bX^a - \bar{X}^a\,.
\end{equation}
Here we have already replaced \(\delta_bX^a = \partial_bX^a\) using the preceding condition and \(f_0^a = y^a\). This condition now uniquely fixes \(\bar{X}^a\), and we will also use it as a simplification in the remaining conditions. Together with the first condition it also shows that \(\Xi = X^a\partial_a + y^b\partial_bX^a\bar{\partial}_a\) is the restriction of the tangent bundle lift \(\hat{\xi}\) of a vector field \(\xi = X^a\partial_a \in \Vect(M)\) to \(O\). Further, \(\hat{\xi}\) must be tangent to \(O\) by the definition of \(\Xi\), and must thus satisfy \(\mathcal{L}_{\hat{\xi}}F = \hat{\xi}(F) = 0\), since the Finsler function \(F\) is constant on \(O\). The same holds for the geometry function \(L\).

\item
From the condition \(\hat{v}_{\mu}^a = \underline{\hat{v}}_{\mu}^b\hat{\partial}_by^a(\hat{x},u) + \underline{\tilde{v}}_{\mu}^{\alpha}\tilde{\partial}_{\alpha}y^a(\hat{x},u)\) follows
\begin{equation}
\begin{split}
0 &= f_{\mu}^b\left[\bar{\partial}_b\hat{X}^c\hat{\partial}_cy^a + \bar{\partial}_b\tilde{X}^{\alpha}\tilde{\partial}_{\alpha}y^a + \bar{\partial}_bu^{\alpha}\left(\hat{X}^c\hat{\partial}_c\tilde{\partial}_{\alpha}y^a + \tilde{X}^{\beta}\tilde{\partial}_{\beta}\tilde{\partial}_{\alpha}y^a\right) - \delta_b\hat{X}^a\right]\\
&= f_{\mu}^b\left(\bar{\partial}_b\bar{X}^a - \delta_bX^a\right) = f_{\mu}^b\left(\partial_bX^a - \partial_bX^a\right)\,,
\end{split}
\end{equation}
where the second line can be derived from the first line by evaluating \(\bar{\partial}_b\bar{X}^a\) with the help of the last relation~\eqref{eqn:xbaserels}, and the last equality follows from the first two conditions, so that this condition is identically satisfied.

\item
From the condition \(g^F_{ab}(\hat{x},u)\hat{v}_i^a\hat{v}_j^b = -\eta_{ij}\) follows
\begin{equation}
\begin{split}
0 &= f_i^af_j^b\left(\hat{X}^c\hat{\partial}_cg^F_{ab} + \tilde{X}^{\alpha}\tilde{\partial}_{\alpha}g^F_{ab} + g^F_{cb}\delta_a\hat{X}^c + g^F_{ac}\delta_b\hat{X}^c\right)\\
&= f_i^af_j^b\left(X^c\partial_cg^F_{ab} + y^d\partial_dX^c\bar{\partial}_cg^F_{ab} + g^F_{cb}\partial_aX^c + g^F_{ac}\partial_bX^c\right) = f_i^af_j^b(\mathcal{L}_{\hat{\xi}}g^F)_{ab}\,.
\end{split}
\end{equation}
This is simply the change of the Finsler metric \(g^F_{ab}\) under a diffeomorphism of \(M\) generated by the vector field \(\xi\). This also vanishes as a consequence of the first two conditions, since
\begin{equation}
(\mathcal{L}_{\hat{\xi}}g^F)_{ab} = \frac{1}{2}\bar{\partial}_a\bar{\partial}_b\left(\mathcal{L}_{\hat{\xi}}F^2\right) = 0\,,
\end{equation}
using again the fact that the Finsler function is constant on \(O\).

\item
From the condition \(\hat{v}_i^b\hat{\partial}_by^a(\hat{x},u) + \tilde{v}_i^{\alpha}\tilde{\partial}_{\alpha}y^a(\hat{x},u) + N^a{}_b(\hat{x},u)\hat{v}_i^b = 0\) follows
\begin{equation}
0 = f_i^b\left(X^c\partial_cN^a{}_b + y^d\partial_dX^c\bar{\partial}_cN^a{}_b - \partial_cX^aN^c{}_b + \partial_bX^cN^a{}_c + y^c\partial_b\partial_cX^a\right) = f_i^b(\mathcal{L}_{\hat{\xi}}N)^a{}_b\,,
\end{equation}
which can be derived in analogy to the previous condition by evaluating \(\delta_b\bar{X}^a\) using the relations~\eqref{eqn:xbaserels}. The result we obtain here is the change of the coefficients \(N^a{}_b\) of the Cartan non-linear connection under a diffeomorphism of \(M\) generated by the vector field \(\xi\). This vanishes as a consequence of the first two conditions, since
\begin{equation}
(\mathcal{L}_{\hat{\xi}}N)^a{}_b = \frac{1}{4}\bar{\partial}_b\left\{\left(\mathcal{L}_{\hat{\xi}}g^F\right)^{ap}\left(y^q\partial_q\bar{\partial}_pF^2 - \partial_pF^2\right) + g^{F\,ap}\left[y^q\partial_q\bar{\partial}_p\left(\mathcal{L}_{\hat{\xi}}F^2\right) - \partial_p\left(\mathcal{L}_{\hat{\xi}}F^2\right)\right]\right\}\,,
\end{equation}
which vanishes by the same arguments as above.
\end{itemize}
The conditions derived above can be summarized as follows: The frame bundle lift \(\bar{\Xi}\) is tangent to \(P\) if and only if \(\Xi\) is the restriction to \(O\) of the tangent bundle lift \(\hat{\xi}\) of a vector field \(\xi \in \Vect(M)\) generating a symmetry of the Finsler spacetime.

We still need to check that the Cartan connection \(A\) is invariant under the restriction of \(\bar{\Xi}\) to \(P\). For the components \(b\) and \(e\) this is trivially satisfied, since
\begin{equation}
\begin{split}
b + e &= f^{-1}{}^{\mu}_a\tilde{\partial}_{\alpha}y^a\left(du^{\alpha} - \delta_bu^{\alpha}d\hat{x}^b\right)\mathcal{Y}_{\mu} + f^{-1}{}^i_ad\hat{x}^a\mathcal{Z}_i\\
&= \underline{\hat{v}}^{-1}{}^{\mu}_ad\hat{x}^a\mathcal{Y}_{\mu} + \underline{\tilde{v}}^{-1}{}^{\mu}_{\alpha}du^{\alpha}\mathcal{Y}_{\mu} + \hat{v}^{-1}{}^i_ad\hat{x}^a\mathcal{Z}_i + \tilde{v}^{-1}{}^i_{\alpha}du^{\alpha}\mathcal{Z}_i\,,
\end{split}
\end{equation}
where we made use of the inverse frame components~\eqref{eqn:invobsfbembed}. This is the restriction of the solder form on \(\mathrm{GL}(O)\) to \(P\), which is invariant under the frame bundle lift \(\bar{\Xi}\) of any vector field \(\Xi\) on \(O\). For \(\Omega^{\mu}{}_{\nu}\) we finally find
\begin{equation}
\mathcal{L}_{\bar{\Xi}}\Omega^{\mu}{}_{\nu} = f^{-1}{}^{\mu}_af_{\nu}^b\left[(\mathcal{L}_{\hat{\xi}}F)^a{}_{bc}dx^c + (\mathcal{L}_{\hat{\xi}}C)^a{}_{bc}\delta y^c + C^a{}_{bc}(\mathcal{L}_{\hat{\xi}}N)^c{}_ddx^d\right]\,.
\end{equation}
A lengthy but straightforward calculation shows that also \((\mathcal{L}_{\hat{\xi}}F)^a{}_{bc}\) and \((\mathcal{L}_{\hat{\xi}}C)^a{}_{bc}\) vanish due to the fact that \(\xi\) is a symmetry of the Finsler spacetime. This result completes the proof that any vector field \(\xi\) on \(M\) which leaves the Finsler geometry invariant uniquely corresponds to a vector field \(\Xi\) on \(O\) which leaves the Cartan geometry invariant and vice versa.

We finally show that \(\Xi\) is spatio-temporal, and thus can be interpreted by observers as the generator of a spacetime symmetry. For this purpose, note that on the observer space \(O\) of a Finsler spacetime the operators \(\Pi_H\), \(\Pi_V\) and \(\Theta\) take the form
\begin{equation}
\Pi_H = \delta_a \otimes dx^a\,, \quad \Pi_V = \bar{\partial}_a \otimes \delta y^a \quad \text{and} \quad \Theta = \delta_a \otimes \delta y^a\,,
\end{equation}
and that the Reeb vector field is given by \(\mathbf{r} = y^a\delta_a\). With
\begin{equation}
\Pi_H\Xi = X^a\delta_a \quad \text{and} \quad \Theta\Pi_V\Xi = (y^b\partial_bX^a + N^a{}_bX^b)\delta_a
\end{equation}
we thus find
\begin{multline}
\Pi_H\mathcal{L}_{\mathbf{r}}\Pi_H\Xi = \Pi_H[y^a\delta_a, X^b\delta_b] = \Pi_H(y^a\partial_aX^b\delta_b + X^bN^a{}_b\delta_a + y^aX^b[\delta_a, \delta_b])\\
= y^a\partial_aX^b\delta_b + X^bN^a{}_b\delta_a = \Theta\Pi_V\Xi\,.
\end{multline}
For a vertical vector field \(\Upsilon = \bar{Y}^a\bar{\partial}_a\) on \(O\) we further find
\begin{equation}
\Pi_H\mathcal{L}_{\Upsilon}\Pi_H\Xi = \Pi_H[\bar{Y}^a\bar{\partial}_a, X^b\delta_b] = \bar{Y}^a\bar{\partial}_aX^b\delta_b = 0\,.
\end{equation}
We thus see that \(\Xi\) is indeed spatio-temporal.
\end{proof}
We have now finally shown that symmetries of a Finsler spacetime can uniquely be described by spatio-temporal symmetries of the induced Cartan geometry on its observer space. This result proves that our definition of symmetries of a Cartan geometry in section~\ref{subsec:cartansym}, together with our definition of spatio-temporal vector fields in section~\ref{subsec:obssym}, provides a suitable generalization of the notion of symmetry also for geometries which go beyond the classical picture of spacetime geometry.

\section{Conclusion}\label{sec:conclusion}
In this article we have introduced three newly developed notions related to Cartan geometry and symmetry. In particular, these notions are: 1. the symmetry, i.e., invariance under diffeomorphisms generated by a vector field on the base manifold \(M\), of a first-order reductive Cartan geometry \((\pi: P \to M, A)\); 2. Cartan geometry on a general space of physical observers, characterized by their positions and velocities; 3. vector fields on observer space generating diffeomorphism which can be interpreted by an observer as transformations of an underlying, possibly local or relative spacetime. We have proven a number of properties of these newly defined notions and their relations to other physically relevant geometries, thereby obtaining the following results.

First, we have discussed our notion of symmetry of first-order reductive Cartan geometries. Our definition essentially differs from the standard definition of (infinitesimal) automorphisms of Cartan geometries by the fact that the latter are defined in terms of (infinitesimal) diffeomorphisms of the total space \(P\), whereas we consider (infinitesimal) diffeomorphisms of the base manifold \(M\). We have applied our definition to Cartan geometries based on two different model Klein geometries, namely the affine and the orthogonal model geometries. We have proven that the standard notions of symmetry of affine, Riemann-Cartan and Riemannian geometries, all of which can be formulated as first-order reductive Cartan geometries and have physical applications as geometries of spacetime, fully agree with our newly defined notion, due to the fact that our notion agrees with the invariance of the linear connections defined by those geometries. The latter also holds true for Weizenb\"ock geometry, where we have found the interesting consequence that the corresponding notion of symmetry is not invariant under local Lorentz transformations of the tetrad field.

Further, we have used our newly introduced definition of observer space Cartan geometries as a starting point to prove that any such geometry \((\pi: P \to O, A)\) induces decompositions of the tangent bundles \(TO\) and \(TP\) into constant rank subbundles, whose elements have a physical interpretation as infinitesimal translations in time and space, boosts and rotations of an observer's frame, generalizing results obtained in~\cite{Gielen:2012fz}. We have then proven that the subbundles of \(TO\) corresponding to boosts and spatial translations are related by a distinguished vector bundle isomorphism, which is unique up to a constant scale factor. These results allowed us to single out a class of vector fields on \(O\), whose physical interpretation is such that they preserve a certain notion of local or relative spacetime, and which we therefore call ``spatio-temporal''.

We have finally applied our definitions and results to Finsler spacetimes, which give rise to observer space Cartan geometries as proven in~\cite{Hohmann:2013fca}. As the main result of this application we have proven that there is a one-to-one correspondence between spacetime vector fields generating symmetries of the Finsler geometry and symmetries of the observer space Cartan geometry in the sense of our newly introduced definition, and that the vector fields generating the latter are automatically spatio-temporal.

In addition to being applicable to specific geometries giving rise to Cartan geometries, our constructions and findings presented in this article also have a number of potential future applications for gravity theories based on Cartan geometry itself. Our definition allows us to construct general Cartan spacetimes with a given symmetry, such as spherical or cosmological symmetry, and in particular symmetric solutions to a given gravity theory. In practice, this amounts to choosing a model geometry \(G/H\) and a principal $H$-bundle \(\pi: \mathcal{P} \to M\), and to determining all $\mathfrak{g}$-valued one-forms on \(\mathcal{P}\) satisfying the conditions on a Cartan connection given in definition~\ref{def:cartan}, which in addition satisfy the invariance conditions given in definition~\ref{def:invcartgeom} for a given set of vector fields on the base manifold \(M\). One may even think of a classification of Cartan spacetimes by their symmetries along the lines of~\cite{Stephani:2003tm}.

Even more interestingly, our findings allow us to go beyond the classical paradigm of an absolute spacetime, and consider the case that spacetime becomes relative to an observer instead. In this situation, in which gravity may be modeled by Cartan geometry on the space of observers, we can essentially apply the same notion of symmetry, augmented by an additional constraint on the symmetry generating vector fields which ensures that they preserve the relative spacetime. Our constructions therefore provide a possible answer to the question how to define solutions to gravity theories on observer space with a certain (such as spherical or cosmological) symmetry, and make it possible to find such solutions.

Apart from being relevant for gravity theories based on these geometries, there are also potential future applications for matter field theories, which use Cartan geometry or any other geometry mentioned above as a background. Here one may naturally ask the question what are the Noether charges corresponding to the notion of symmetry we defined here. Following Noether's theorem, these can be derived by considering an action for matter fields on a Cartan geometric background possessing suitable symmetry. In order to construct suitable matter field actions, the Cartan gauge symmetry of the underlying geometrical background may serve as a guiding principle. We leave the investigation of such matter couplings, following similar considerations as in~\cite{Hehl:1994ue} for metric affine gauge theory, for future work.

\acknowledgments
The author is happy to thank Friedrich Wilhelm Hehl for helpful comments. He gratefully acknowledges the full financial support of the Estonian Research Council through the Postdoctoral Research Grant ERMOS115 and the Startup Research Grant PUT790.

\end{document}